\newtheorem{thm}{\textbf{Theorem}}
\newtheorem{definition}{\textbf{Definition}}
\newtheorem{assumption}{Assumption}
\newtheorem{corollary}{\textbf{Corollary}}
\DeclareFontFamily{U}{mathx}{}
\DeclareFontShape{U}{mathx}{m}{n}{<-> mathx10}{}
\DeclareSymbolFont{mathx}{U}{mathx}{m}{n}
\DeclareMathAccent{\widehat}{0}{mathx}{"70}
\DeclareMathAccent{\widecheck}{0}{mathx}{"71}
\title{\LARGE \bf
Provably Stable Multi-Agent Routing with Bounded-Delay \\
Adversaries in the Decision Loop}
\author{Roee M. Francos*, Daniel Garces*, and Stephanie Gil
\thanks{(*Co-primary authors) R.M.~Francos, D.~Garces and S.~Gil are with the School of Engineering and Applied Sciences, Harvard University, Cambridge, MA 02138 USA (e-mails: {\tt\small rfrancos@seas.harvard.edu, dgarces@g.harvard.edu, sgil@seas.harvard.edu}).}
}
\begin{document}

\maketitle
\thispagestyle{empty}
\pagestyle{empty}

\begin{abstract}
 In this work, we are interested in studying multi-agent routing settings, where adversarial agents are part of the assignment and decision loop, degrading the performance of the fleet by incurring bounded delays while servicing pickup-and-delivery requests.  Specifically, we are interested in characterizing conditions on the fleet size and the proportion of adversarial agents for which a routing policy remains stable, where stability for a routing policy is achieved if the number of outstanding requests is uniformly bounded over time. To obtain this characterization, we first establish a threshold on the proportion of adversarial agents above which previously stable routing policies for fully cooperative fleets are provably unstable. We then derive a sufficient condition on the fleet size to recover stability given a maximum proportion of adversarial agents. We empirically validate our theoretical results on a case study on autonomous taxi routing, where we consider transportation requests from real San Francisco taxicab data.
\end{abstract}

\section{Introduction}

In this paper we focus on a routing setting where a fleet of agents must pick up and deliver stochastically appearing requests. This stochastic setup is common in mobility-on-demand \cite{Alonso2017, garces2023multiagent, garces2024approximate} and warehouse logistics \cite{Chen2021, Wilde2024}, where the location and quantity of future requests are unknown in advance. We assume that each agent handles one request at a time. In our setup, a subset of agents in the fleet may act adversarially by deviating from the prescribed plan set by the centralized control system, resulting in longer than expected service times for their assigned requests. This service delay model is inspired by operations research studies \cite{mari2025online}, particularly in transportation and delivery systems \cite{smith2008dynamic,zhang2016control}, where drivers, after accepting a request, may pause for personal breaks or take longer routes to increase earnings when compensated per mile. We assume that if the agents take too long to service a request, then the system will remove them, hence agents can only incur a bounded delay. Hereafter we refer to this as the \emph{bounded-delay model} for adversaries. Our objective in this paper is then to characterize conditions on the fleet size and the proportion of adversarial agents in the system for which a routing policy is provably stable in the presence of bounded delay adversarial agents, where a stable routing policy is one that guarantees that the number of outstanding requests is uniformly bounded over time.

Previous work in the literature has studied necessary and sufficient conditions for stability of fully cooperative fleets of agents \cite{spieser2014, zhang2016control, garces2024approximate}. Guaranteeing stability ensures that outstanding requests are bounded uniformly over time, which is often desirable in an effective routing policy.
However, the assumption that a fleet acts fully cooperatively is a strong assumption, often not holding in practical settings, where agents can deviate from their assigned plan \cite{yu2022delay}. This deviation can cause increased wait times and commute times, which might result in requests not being serviced and accumulating over time, leading to instability of a routing policy.

To address the possible influence of adversaries in multi-agent routing problems, related works in the literature primarily focus on the impact of adversarial agents external to the controlled fleet \cite{Yuan2016, Thai2018, Kearney2018induction, shishika2018local, shishika2019team, kailkhura2014asymptotic, Altman2010adversarial, Zhu2012deceptive, Chu2023, Chu2024, Sharma2019attacks, Qayyum2020securing, Guo2023, Poudel2024, cavorsi2022adaptive, yu2024sensing, francos2021search}. These studies typically focus on scenarios where adversaries reduce the number of agents available to service requests, thereby impairing overall fleet performance. However, these works do not consider settings in which adversarial agents remain within the system and cannot be removed. In such cases, compromised agents continue to participate in the decision-making loop, receive task assignments, and consequently degrade system performance from within. Addressing this vulnerability is crucial for mitigating the impact of both intentional and unintentional adversaries on the system. Ideally, stability can be restored and maintained even when a subset of the agents in the network is adversarial.

In this paper we characterize conditions under which previously known results on stability break, and we derive new conditions to recover stability in the presence of bounded-delay adversaries.
Specifically, we are interested in characterizing conditions on the proportion of adversarial agents and the size of the agent fleet under which routing policies are stable. To achieve this, we derive an upper bound on the adversaries' impact by quantifying their wasted effort within the fleet. This inefficiency arises from the indirect routes adversaries take while fulfilling service requests. 
For this work, we consider an instantaneous assignment routing policy, which solves a bipartite matching problem using incoming requests and available agents. We select this policy due to its widespread adoption in applications like mobility-on-demand \cite{Alonso2017, garces2023multiagent, Rivas2019} and the existence of theoretical bounds on the fleet size for fully cooperative (non-adversarial) fleets \cite{garces2024approximate}, which allows us to readily characterize the impact of adversarial agents on the system.

Our contributions in this paper are as follows. First, we characterize as a function of known problem parameters, such as the delay introduced by the adversarial agents, the expected number of requests, and the map size, two main results:
\begin{itemize}
    \item A threshold on the proportion of adversarial agents beyond which previously considered routing policies \cite{garces2024approximate} are provably unstable.
    \item A sufficient condition on the fleet size required for stability of the instantaneous assignment policy under the presence of bounded-delay adversarial agents. 
    \end{itemize} 
Lastly,
    \begin{itemize}
    \item We consider a case study on urban transportation to validate our theoretical results. We first empirically demonstrate that existing bounds on the fleet size required for stability of previously studied routing policies — such as random and instantaneous assignment — fail in the presence of adversarial agents. We then empirically show that our newly derived sufficient condition successfully recovers stability. For our simulations, we consider real transportation requests for San Francisco taxicabs ~\cite{piorkowski2009crawdad}. Notably, our empirical results highlight the fact that stability can be restored even when the number of added cooperative agents is less than the number of adversaries.
\end{itemize}

Our theoretical and empirical results provide guidance for system designers in evaluating whether a given fleet size can tolerate a specified proportion of adversarial agents while maintaining stability. When the adversarial proportion exceeds the tolerable threshold, our derived fleet size condition can be used to determine an appropriate fleet size that guarantees stability under a worst-case level of adversarial presence.

\section{Problem Formulation}
In this section, we formulate the problem of routing a fleet of agents to fulfill on-demand pickup-and-delivery requests when the fleet is composed of both cooperative and adversarial agents. Our goal is to characterize conditions on the fleet size that ensure the stability of a routing policy, even when adversarial agents are present. In the following subsections we provide formal definitions for the environment and control space, the request model, the adversarial influence model, and our definition of stability.

\subsection{Environment and Control Space}
We assume that we have a centralized server that has access to the location of all agents. We assume that we have discrete time steps and the system runs for a fixed time horizon of length $T$.
We represent the environment for the routing problem as a directed graph with a fixed topology. We denote the directed graph as  ${\mathcal{G}} = \left( {\mathds{V},\mathds{E}} \right)$ where $\mathds{V}$ corresponds to the set of nodes in the graph, while $\mathds{E} \subseteq \{ (i,j) | i, j \in \mathds{V}\} $ corresponds to the graph's edges. 
We denote the set of neighboring nodes to node $i$ as $\mathcal{N}_i$, where ${\mathcal{N}_i} = \left\{ {j|j \in \mathds{V},(i,j) \in \mathds{E}} \right\}$.

We assume the fleet of agents is composed of both adversarial agents and cooperative agents.
We denote the set of adversarial agents as $\mathcal{A}$ and the set of cooperative agents as $\mathcal{C}$. Sets $\mathcal{A}$ and $\mathcal{C}$ are unknown and set membership for agents can change over time. We denote the full fleet size as $N$, where $N=|\mathcal{A}|+|\mathcal{C}|$. The fleet size $N$ is assumed to be fixed for the entire time horizon $T$. We denote the proportion of adversarial agents in the system as $F = \frac{|\mathcal{A}|}{N}$.

We define the state of the system at time $t$ as ${x_t} = \left[{\overrightarrow {{\nu _t}} ,\overrightarrow {{\tau _t}} } \right]$, where $\overrightarrow {{\nu _t}}  = \left[ {\nu _t^1,...,\nu _t^N} \right]$ corresponds to the list of locations for all $N$ agents at time $t$, and $\overrightarrow {{\tau _t}}  = \left[ {\tau _t^1,...,\tau _t^N} \right]$ corresponds to the list of expected times remaining in the current trip for all $N$ agents.
If agent $\ell$ is available, then $\tau_t^\ell = 0$. Otherwise, $\tau_t^\ell \in \mathbb{N}^+$.

For simplicity, we assume that each edge of the network graph ${\mathcal{G}}$ can be traversed by any agent in one time step. 
For this reason, we define the control space for agent $\ell$ at time $t$ as $\textbf{U}_t^\ell \left( {{x_t}} \right) = {{\mathcal{N}}_{\nu _t^\ell }} \bigcup \left\{ \nu _t^\ell ,{\psi _r} \right\}$, where ${{\mathcal{N}}_{\nu _t^\ell }}$ corresponds to the set of adjacent nodes to $\nu_t^\ell$ the current location of agent $\ell$, and $\psi _r$ represents a pickup control that becomes available if there is a request available at the agent's location. The control space for the entire fleet is then expressed as the cartesian product ${\textbf{U}_t}\left( {{x_t}} \right) = \textbf{U}_t^1\left( {{x_t}} \right) \times  \cdot  \cdot  \cdot  \times \textbf{U}_t^N\left( {{x_t}} \right)$.

\subsection{Request Model}
Following the notation for requests defined in \cite{garces2024approximate}, we define a pickup-and-delivery request as a tuple $ r = \left\langle {{\rho _r},{\delta _r},{t_r},{\phi _r}} \right\rangle$
where $\rho _r ,\delta_r  \in \mathds{V}$, are the request's desired pickup and drop-off locations, respectively. $t_r$ denotes the time at which the request entered the system, and $\phi _r$ is an indicator function corresponding to a binary pickup status of the request. Requests appear stochastically.

We model the request distribution using three random variables: 1) $\eta$ for the number of requests entering the system at each time step; 2) $\rho$ for the pickup location of a request; and 3) $\delta$ for the drop-off location of a request. We assume that the pickup and drop-off locations of different requests are independent and identically distributed.

\subsection{Adversarial Influence Model}
In this study, adversarial agents deviate from the prescribed plan set by the centralized control system. Specifically, we adopt a bounded-delay model similar to the one introduced in \cite{mari2025online}, where agents may incur a bounded delay while servicing requests. To formally define this model, we define $\Delta$ as the maximum allowable delay an agent can incur during a request's pickup or drop-off. Additionally, we assume that adversarial agents remain stationary at their current location until they are assigned a request. We denote the actual delay incurred by an adversarial agent $\ell_a$ while picking up a request $r$ as the error term $e^{\text{pick}}(\ell_a, r)$. Similarly, we denote the actual delay incurred by agent $\ell_a$ while dropping off request $r$ as the error term $e^{\text{drop}}(\ell_a, r)$. Using these delays, we define the bounded-delay model as follows:

\begin{definition}
    An adversarial agent $\ell_a \in \mathcal{A}$ located at node $v_a \in \mathds{V}$ is following the bounded-delay model if the following two conditions are satisfied:
    \begin{enumerate}
        \item Agent $\ell_a$ remains stationary at location $v_a$ until a request gets assigned to it.
        \item Once a request $r$ gets assigned to agent $\ell_a$, it can incur a bounded delay such that:
        \begin{align*}
            d_{\text{adv}}(v_a, \rho_r) & = d(v_a, \rho_r) + e^{\text{pick}}(\ell_a, r) \leq d(v_a, \rho_r) + \Delta \\
            d_{\text{adv}}(\rho_r, \delta_r) & = d(\rho_r, \delta_r) + e^{\text{drop}}(\ell_a, r) \leq d(\rho_r, \delta_r) + \Delta
        \end{align*}
        where $d_{\text{adv}}(v_a, \rho_r)$ corresponds to the time it takes the adversarial agent to travel from its current location to the pickup location of request $r$, and $d_{\text{adv}}(\rho_r, \delta_r)$ corresponds to the time it takes the adversarial agent to travel from the pickup location to the drop-off location of request $r$.
    \end{enumerate}
    \label{def:adversarial_model}
\end{definition}

Under this delay model, adversarial agents may remain occupied for longer, which prevents them from servicing other requests. This reduction in availability of agents results in a potential accumulation of requests. This phenomenon is the root cause for the instability of routing policies that do not account for the presence of adversarial agents. 

\subsection{Stability of a Policy}
We define a policy $\pi  = \left\{ {{\mu _1},{\mu _2},...} \right\}$ as a set of functions that map state $x_t$ into control ${u_t} = {\mu _t}\left( {{x_t}} \right) \in {\textbf{U}_t}\left( {{x_t}} \right)$. 
Generalizing the formulations in \cite{spieser2014} and \cite{garces2024approximate}, we define the total time traveled in service of request $r_q$ when following a policy $\pi$ as $W_{r_q}^{\pi} = d(v_{r_q}^{\pi}, \rho_{r_q}) + d(\rho_{r_q}, \delta_{r_q})$, where $v_{r_q}^{\pi}$ corresponds to the location of the agent assigned to request $r_q$ according to policy $\pi$, and $\rho_{r_q}$ and $\delta_{r_q}$ correspond to the pickup and drop-off locations for request $r_q$, respectively. We assume that the selected policy $\pi$ moves agents closer to their assigned request once an assignment has been executed. We define $Z_{T}^{\pi}$ as the total time needed for policy $\pi$ to service all requests that enter the system during a horizon of length $T$, assuming that we have at least as many available agents as incoming requests at each time step. The expression for $Z_{T}^{\pi}$ is given by:
\begin{equation}
    Z_{T}^{\pi}=\sum_{t=1}^T \sum_{q=R_{t-1}+1}^{R_{t}} W_{r_q}^{\pi}
\end{equation} 
Where $R_t = \sum\limits_{t' = 1}^t {{\eta _{t'}}}$ is the total number of requests that have entered the system until time $t$ given that $\eta_{t'}$ is the instantiation of the random variable $\eta$ at time $t'$. 

We define the effective amount of time that an arbitrary fleet of agents can spend servicing requests given a policy $\pi$ during a horizon of length $T$ as $Q_{T}^{\pi}$. For the case where the fleet is fully cooperative and composed of $N$ agents, $Q_{T}^{\pi} = NT$, since we have $N$ agents that each move one time step for a time horizon of length $T$. In the case of adversarial agents, $Q_{T}^{\pi} \leq NT$ given that the adversarial agents incur a bounded delay and hence have less time available to service requests. In this paper, we generalize the arguments in \cite{spieser2014} and \cite{garces2023multiagent} to define stability using $Q_{T}^{\pi}$ as follows:
\begin{definition}
    A policy $\pi$ is stable if $E[Z_{T}^{\pi}] \leq Q_{T}^{\pi}$ for a horizon $T$.
    \label{def:routing_stability}
\end{definition}
This stability definition implies that if the expected amount of time to service requests exceeds the amount of time that agents can spend servicing requests then some requests can't be serviced by the current fleet size of $N$ agents under policy $\pi$. This situation then leads to accumulation of requests over time, which results in the number of outstanding requests growing unboundedly as $T \to \infty$. 

In this paper, we address the problem of characterizing a sufficiently large fleet size $N$ given the maximum allowable proportion of adversarial agents $F_{\text{max}}$ for which an instantaneous assignment policy $\bar{\pi}$ is stable according to Def.~\ref{def:routing_stability}, even in the presence of adversarial agents that follow the bounded-delay model defined in Def.~\ref{def:adversarial_model}. 

\section{Policy of Interest}
\label{sec:policies_def} 

In this work, we focus on an instantaneous assignment policy, which we denote as $\bar{\pi}$. The policy $\bar{\pi}$ assigns available agents to incoming requests by solving a bipartite matching problem that minimizes the time it takes the agents to service the requests. The bipartite matching solution is given by a matching algorithm, like the auction algorithm \cite{bertsekas1988auction, bertsekas2024new} or the modified JVC algorithm \cite{crouse2016implementing}. We assume that agents do not move unless a request has been assigned to them. In addition, once a request has been assigned to an agent, the agent cannot be assigned to a different request until it has serviced the originally assigned request. If a request $r_q$ is assigned to an agent $\ell_k$ and the request has not been picked up yet (i.e. $\phi_{r_q} = 0$), then at each time step, the agent will move towards $\rho_{r_q}$ the pickup location of request $r_q$ following the shortest path between the agent's original location and $\rho_{r_q}$. Once the agent has picked up the request (i.e. $\phi_{r_q} = 1$), the agent will move towards the drop-off location of request $r_q$ by following the shortest path between the pickup location $\rho_{r_q}$ and the drop-off location $\delta_{r_q}$.

We choose to focus on the instantaneous assignment policy since this policy is commonly used in routing applications \cite{Alonso2017, garces2023multiagent, garces2024approximate, garces2024surge}, it has previous theoretical bounds on the fleet size for fully cooperative fleets of agents \cite{garces2024approximate}, and its structured assignment makes it a reasonable choice for analysis when considering adversarial agents.

As a stepping stone in our analysis, we also consider a simple routing policy referred to as random assignment $\hat{\pi}$. This policy also moves agents that have been assigned to requests along the shortest path just like instantaneous assignment, but instead of solving a bipartite matching problem to generate the assignment, it assigns available agents to incoming requests uniformly at random, assigning one agent per request. The randomness of the assignment process for $\hat{\pi}$ makes the locations of agents and requests' pickups independent, simplifying the analysis. Once we obtain an upper bound for the amount of time required to service all requests under the random assignment policy, we prove that instantaneous assignment results in lower expected times needed to service requests, and hence the upper bound for random assignment also applies for the instantaneous assignment policy. Therefore, stability of the random assignment routing policy implies stability of the instantaneous assignment routing policy.

Now that we have the policies' definitions, we will present some results that we build on top of.

\begin{algorithm}
\caption{Random Assignment Policy}
\begin{algorithmic}[1]
    \label{alg:random_assignment_algo}
   \REQUIRE Current state ${x_t} = \left[ {\overrightarrow {{\nu _t}} ,\overrightarrow {{\tau _t}}} \right]$, and ${{\overline {\bf{r}} }_t}$ the set of outstanding requests at time $t$
   \ENSURE The agent control vector $\overrightarrow U$
   \STATE $\overrightarrow U = []$, $ \overrightarrow a = [1,...,N]$
   \FOR{$\ell \in [1,...,N]$}
           \IF{$\tau _t^\ell!=0$}
                \STATE h = \textbf{GetNextHopInGraph}$(\nu _t^{\ell}, \rho_r,\delta_r)$
                \STATE $\overrightarrow U[\ell]=h$
                \STATE $\overrightarrow a.\text{remove}(\ell)$
           \ENDIF
    \ENDFOR
    \FOR{$q \in  {{\bar{\textbf{r}}}_t}$}
    \STATE $\ell_q$ = \textbf{RandomlyChooseAvailableAgent}$(\overrightarrow a)$
            \STATE $ \overrightarrow a.\text{remove}(\ell_q)$
            \STATE {$\overrightarrow \tau _t^{\ell_q}=d(\nu _t^{\ell_q},\rho_{r_q})+d(\rho_{r_q},\delta_{r_q})$}
            \STATE h = \textbf{GetNextHopInGraph}$(\nu _t^{\ell_q}, \rho_{r_q},\delta_{r_q})$
             \STATE $\overrightarrow U[\ell_q]=h$
         \STATE {$\overrightarrow{{\tau _t}}[\ell_q]= \tau _t^{\ell_q} $}
    \ENDFOR
     \FOR{$\ell \in a$}
     \STATE $\overrightarrow U[\ell]=\nu _t^{\ell}$
     \ENDFOR
   \RETURN $\overrightarrow U$
\end{algorithmic}
\end{algorithm}

\begin{algorithm}
\caption{Instantaneous Assignment Policy}
\begin{algorithmic}[1]
\label{alg:instantaneous_assignment_algo}
   \REQUIRE Current state ${x_t} = \left[ {\overrightarrow {{\nu _t}} ,\overrightarrow {{\tau _t}}} \right]$, and ${{\overline {\bf{r}} }_t}$ the set of outstanding requests at time $t$
   \ENSURE The agent control vector $\overrightarrow U$
   \STATE $\overrightarrow U = []$, $ \overrightarrow a = [1,...,N]$, $\overrightarrow {{\nu _{t_q}}}  = \left[ {\nu _t^1,...,\nu _t^N} \right]$
   \FOR{$\ell \in [1,...,N]$}
           \IF{$\tau _t^\ell!=0$}
                \STATE h = \textbf{GetNextHopInGraph}$(\nu _t^{\ell}, \rho_r,\delta_r)$
                \STATE $\overrightarrow U[\ell]=h$
                \STATE $\overrightarrow a.\text{remove}(\ell)$
                \STATE $\overrightarrow {{\nu _{t_q}}}.\text{remove}(\nu _t^\ell)$
           \ENDIF
    \ENDFOR
     \STATE $\ell_{IA},\overrightarrow{r_{t_q}}$ = \textbf{PerformMatchingUsingAuctionAlg}$(\overrightarrow {{\nu _{t_q}}},{{\bar{\textbf{r}}}_t})$
    \FOR{$\ell \in  \ell_{IA}$}
            \STATE $ \overrightarrow a.\text{remove}(\ell)$
            \STATE $\rho_{r_q},\delta_{r_q}=\overrightarrow{r_{t_q}}[\ell]$
            \STATE {$\overrightarrow \tau _t^{\ell}=d(\nu _t^{\ell},\rho_{r_q})+d(\rho_{r_q},\delta_{r_q})$}
            \STATE h = \textbf{GetNextHopInGraph}$(\nu _t^{\ell}, \rho_{r_q},\delta_{r_q})$
             \STATE $\overrightarrow U[\ell]=h$
         \STATE {$\overrightarrow{{\tau _t}}[\ell]= \tau _t^{\ell} $}
    \ENDFOR
     \FOR{$\ell \in a$}
     \STATE $\overrightarrow U[\ell]=\nu _t^{\ell}$
     \ENDFOR
   \RETURN $\overrightarrow U$
\end{algorithmic}
\end{algorithm}

\section{Background: Necessary and Sufficient Conditions for Stability of a Policy with Only Cooperative Agents}
\label{sec:background}

Previous work in the literature \cite{garces2024approximate} has looked at necessary and sufficient conditions on the fleet size when the entire fleet of agents is cooperative (i.e. $|\mathcal{A}|=0$). For completion, we restate the two theoretical results that we build on top off.

We denote the initial location of an arbitrary agent $\ell$ at time $t=0$ by the random variable $\xi_\ell$. All initial locations $\xi_\ell$ for $\ell= 1,...,N$ are assumed to be i.i.d with a known underlying distribution $p_\xi$. 

For both, these background results and the results presented in this paper, we assume that $\eta$, $\rho$, and $\delta$ have underlying probability distributions $p_{\eta}$, $p_{\rho}$, and $p_{\delta}$, respectively, and these probability distributions can be estimated from historical data.

We denote $\pi_{\text{base}}$ as the routing policy of instantaneous assignment with reassignment, formally defined in \cite{garces2024approximate}.

\begin{thm}
    (Lemma 1 in \cite{garces2024approximate}) Let the random variable $v_{\text{rand}}$ with support $\mathds{V}$ represent the location of a random agent that gets assigned to a request after that agent has previously served a different request. Define $D_{\text{max}} \triangleq \max\{E[d(\xi, \rho)], E[d(v_{\text{rand}}, \rho)]\} + E[d(\rho, \delta)]$ to be the maximum expected time an agent needs to travel to service a request in the map. If the fleet size $N$ satisfies 
    $N \geq  E[\eta] \cdot D_{\text{max}}$
    then the random assignment policy $\hat{\pi}$ is a stable policy according to Def.~\ref{def:routing_stability}.
    \label{theorem:coop_sufficient_condition}
\end{thm}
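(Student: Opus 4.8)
The plan is to verify Definition~\ref{def:routing_stability} directly. Since the fleet here is fully cooperative, $Q_T^{\hat\pi} = NT$, so it suffices to show that $E[Z_T^{\hat\pi}] \leq NT$ whenever $N \geq E[\eta]\, D_{\text{max}}$. I would begin from the definition $Z_T^{\hat\pi} = \sum_{t=1}^{T} \sum_{q=R_{t-1}+1}^{R_t} W_{r_q}^{\hat\pi}$, pull the deterministic outer sum over $t$ outside the expectation, and then reduce the problem to bounding (i) the per-request expected service time $E[W_{r_q}^{\hat\pi}]$ and (ii) the expectation of the inner sum, which has a random number $\eta_t = R_t - R_{t-1}$ of terms.

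The key structural observation is that under random assignment $\hat\pi$ the agent matched to a request is drawn uniformly at random from the available agents, so its location is statistically independent of that request's pickup location $\rho_{r_q}$, which in turn is i.i.d.\ across requests and independent of $\eta_t$. An agent assigned to $r_q$ is either servicing its first request of the horizon --- in which case its location is distributed as the initial location $\xi$ --- or it has already completed at least one request, in which case (because drop-off locations are i.i.d.) its location is distributed as $v_{\text{rand}}$. In either case independence gives $E[d(v_{r_q}^{\hat\pi}, \rho_{r_q})] \leq \max\{E[d(\xi,\rho)],\, E[d(v_{\text{rand}},\rho)]\}$, and since $\rho_{r_q}\sim\rho$ and $\delta_{r_q}\sim\delta$ we conclude $E[W_{r_q}^{\hat\pi}] \leq D_{\text{max}}$ for every request, irrespective of the history.

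With the per-request bound in hand, I would apply Wald's identity to the inner sum: conditioning on $\eta_t$ and using that $\eta_t$ is independent of the locations determining the $W_{r_q}^{\hat\pi}$'s, $E\big[\sum_{q=R_{t-1}+1}^{R_t} W_{r_q}^{\hat\pi}\big] \leq E[\eta_t]\, D_{\text{max}} = E[\eta]\, D_{\text{max}}$. Summing over $t = 1,\dots,T$ yields $E[Z_T^{\hat\pi}] \leq T\, E[\eta]\, D_{\text{max}}$, and the hypothesis $N \geq E[\eta]\, D_{\text{max}}$ then gives $E[Z_T^{\hat\pi}] \leq NT = Q_T^{\hat\pi}$, which is precisely the stability condition of Def.~\ref{def:routing_stability}.

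The main obstacle I anticipate is making the independence step fully rigorous: one must argue that the random location of the agent $\hat\pi$ assigns to $r_q$ is independent of $\rho_{r_q}$, carefully separating the cases of a never-used agent and a reassigned agent, and one must control the dependence between an agent's current location and the whole past request sequence without ever violating $E[W_{r_q}^{\hat\pi}] \leq D_{\text{max}}$. The remaining bookkeeping --- Wald's identity for the random summation limits, together with the standing assumption that there are always at least as many available agents as incoming requests (already baked into the definition of $Z_T^{\hat\pi}$) --- is routine once the per-request bound is secured.
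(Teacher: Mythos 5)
Your argument is correct. Note that the paper itself does not prove this statement --- it is imported verbatim as Lemma~1 of \cite{garces2024approximate} --- but the template the paper uses for its generalization (the proof of Theorem~\ref{theorem:new_sufficient_condition_RA}) relies on exactly the same two ingredients you identify: the independence of the randomly assigned agent's location from the request's pickup location, and the resulting per-request bound $E[W_{r_q}^{\hat\pi}] \leq D_{\text{max}}$ obtained by splitting into the ``first assignment'' case (location $\sim \xi$) and the ``subsequent assignment'' case (location $\sim v_{\text{rand}}$). Where you differ is the bookkeeping: the paper reindexes all $R_T$ requests by the status of the assigned agent, bounds the count of initial-position assignments by the fleet size so that those terms vanish after dividing by $T$ and letting $T \to \infty$, whereas you absorb both cases into the single uniform bound $D_{\text{max}}$ and apply the tower property (your Wald step) per time slice. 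Your route is slightly cleaner for the fully cooperative case since it yields the finite-horizon inequality $E[Z_T^{\hat\pi}] \leq T\,E[\eta]\,D_{\text{max}} \leq NT$ directly, with no limiting argument; the paper's decomposition pays off only in the adversarial generalization, where the initial-position and subsequent-position populations must be tracked separately because they incur different delay penalties. The independence caveat you flag is real but benign here: under $\hat\pi$ the assignment is uniform over available agents and oblivious to $\rho_{r_q}$, so conditioning on the agent's usage history still leaves its location independent of the current request's pickup, which is all the per-request bound needs.
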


\begin{thm}
    (Theorem 2 in \cite{garces2024approximate}) Let $\mathit{WD}(p_\delta, p_\rho)$ denote the first Wasserstein distance \cite{ruschendorf1985wasserstein} between probability distributions $p_{\delta}$ and $p_{\rho}$ with support $\mathds{V}$, such that:
    \begin{align*}
        \mathit{WD}(p_\delta, p_\rho) = \inf_{\gamma \in \Gamma(p_\delta, p_\rho)} \int_{x,y\in \mathds{V}} ||y-x|| d\gamma(x,y),
    \end{align*}
    where $||\cdot||$ is the Euclidean metric, and $\Gamma(p_\delta, p_\rho)$ is the set of measures over the product space $\mathds{V} \times \mathds{V}$ with marginal densities $p_\delta$ and $p_{\rho}$, respectively. Define $D_{\text{min}} \triangleq \mathit{WD}(p_\delta, p_\rho) + E[d(\rho, \delta)]$ as the expected minimum time an agent needs to travel to service a request in the map. Assume that the random variables for pickups $\rho$ and drop-offs $\delta$ are independent. Let $T\to \infty$ and the fleet size $N < E[\eta] \cdot D_{\text{min}}$. Then, the policy instantaneous assignment with reassignment policy $\pi_{\text{base}}$ is unstable according to Def.~\ref{def:routing_stability}.
    \label{theorem:coop_necessary_condition}
\end{thm}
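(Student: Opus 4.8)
The plan is to prove the statement directly: assuming $N < E[\eta]\cdot D_{\text{min}}$, I will show that $E[Z_T^{\pi_{\text{base}}}]$ grows asymptotically like $T\,E[\eta]\,D_{\text{min}}$, so that for all sufficiently large $T$ it exceeds $Q_T^{\pi_{\text{base}}} = NT$, which by Def.~\ref{def:routing_stability} means $\pi_{\text{base}}$ is unstable. The first step is a decomposition of the total work via $W_{r_q}^{\pi_{\text{base}}} = d(v_{r_q}^{\pi_{\text{base}}},\rho_{r_q}) + d(\rho_{r_q},\delta_{r_q})$, writing $E[Z_T^{\pi_{\text{base}}}] = E\big[\sum_{q=1}^{R_T} d(v_{r_q}^{\pi_{\text{base}}},\rho_{r_q})\big] + E\big[\sum_{q=1}^{R_T} d(\rho_{r_q},\delta_{r_q})\big]$. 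The drop-off leg is policy-independent, and since $R_T = \sum_t \eta_t$ depends only on the arrival process (independent of the i.i.d.\ pickup/drop-off pairs), a Wald-type identity makes the second term exactly $E[R_T]\,E[d(\rho,\delta)] = T\,E[\eta]\,E[d(\rho,\delta)]$. It then remains to lower-bound the expected total pickup-leg length by $\big(1-o(1)\big)\,T\,E[\eta]\,\mathit{WD}(p_\delta,p_\rho)$.

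The heart of the proof is an optimal-transport lower bound on the pickup legs. Each request is served by an agent that begins its trip toward that request either at its initial location $\xi_\ell$ (this occurs for at most $N$ requests, once per agent) or at the drop-off location of the previous request that agent served, which is an i.i.d.\ draw from $p_\delta$; under the assumed independence of $\rho$ and $\delta$, the collection of trip-start locations is an i.i.d.\ sample from $p_\delta$ (up to the $O(N)$ starts drawn from $p_\xi$) matched, trip by trip, against the collection of pickups, an i.i.d.\ sample from $p_\rho$. Thus the assignment realized by $\pi_{\text{base}}$ over the horizon is a particular bijective matching between two point sets of size $R_T$, so $\sum_{q=1}^{R_T} d(v_{r_q}^{\pi_{\text{base}}},\rho_{r_q})$ is at least the minimum-cost perfect matching between them, which equals $R_T$ times the first Wasserstein distance between the two \emph{empirical} measures. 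Since $\mathds{V}$ is a finite metric space, the empirical measure of the pickups converges to $p_\rho$ and that of the trip-start locations converges to $p_\delta$ (the $O(N)$ contaminants contribute a vanishing $O(N/R_T)$ perturbation), so by continuity of $W_1$ under weak convergence the per-point matching cost is at least $\mathit{WD}(p_\delta,p_\rho) - o(1)$ almost surely; taking expectations and using $E[R_T] = T\,E[\eta]$ gives $E\big[\sum_{q=1}^{R_T} d(v_{r_q}^{\pi_{\text{base}}},\rho_{r_q})\big] \geq T\,E[\eta]\,\mathit{WD}(p_\delta,p_\rho) - o(T)$.

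Combining the two estimates yields $E[Z_T^{\pi_{\text{base}}}] \geq T\,E[\eta]\,\big(\mathit{WD}(p_\delta,p_\rho) + E[d(\rho,\delta)]\big) - o(T) = T\,E[\eta]\,D_{\text{min}} - o(T)$. By hypothesis $E[\eta]\,D_{\text{min}} = N + \varepsilon$ for some $\varepsilon > 0$, so for all $T$ large enough the $o(T)$ term is absorbed and $E[Z_T^{\pi_{\text{base}}}] > NT = Q_T^{\pi_{\text{base}}}$; letting $T\to\infty$, the policy fails the stability inequality of Def.~\ref{def:routing_stability} for every large horizon, i.e.\ it is unstable, as claimed.

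The step I expect to be the main obstacle is making the reduction to a matching cost fully rigorous for $\pi_{\text{base}}$, which is allowed to \emph{reassign} agents: a reassigned agent may be diverted mid-route, so its location $v_{r_q}^{\pi_{\text{base}}}$ at the moment of (re)assignment can be an interior point of a shortest path rather than a clean drop-off or initial location, and one must argue this cannot push the aggregate pickup cost below the Wasserstein floor — I would handle this by noting that every such interior point lies on a geodesic from a previous drop-off toward some pickup and using the triangle inequality to re-attribute its distance to the pickup ultimately served without decreasing the total, together with the assumption that the policy moves agents toward their assigned requests. A related bookkeeping point is that, under $Z_T^{\pi_{\text{base}}}$ as defined (the service time of \emph{every} request entering during $[1,T]$, evaluated as if enough agents were available), one must verify that the trip-start and pickup point sets have equal size $R_T$ and are genuinely in bijection, and one must justify interchanging expectation with the random sum when the summands are not independent of $R_T$; the finiteness of $\mathds{V}$, hence boundedness of all distances, makes the latter routine via dominated convergence.
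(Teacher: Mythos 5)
You should first note that the paper does not actually prove this statement: Theorem~\ref{theorem:coop_necessary_condition} is restated verbatim from \cite{garces2024approximate} as background material, and the only argument the paper supplies nearby is the short proof of Corollary~\ref{coro:coop_necessary_condition}, which simply cites the facts $E[Z_{T}^{\hat{\pi}}] \geq E[Z_{T}^{\bar{\pi}}] \geq E[Z_{T}^{\pi_{\text{base}}}]$ and $\lim_{T\to\infty}\frac{1}{T}E[Z_{T}^{\pi_{\text{base}}}] > E[\eta]D_{\text{min}}$ from that reference. So there is no in-paper proof to compare against; what you have written is a reconstruction of the external argument. Your overall architecture --- split $W_{r_q}$ into pickup and drop-off legs, handle the drop-off legs by a Wald-type identity, and lower-bound the aggregate pickup legs by an optimal-transport quantity that converges to $\mathit{WD}(p_\delta,p_\rho)$ --- is the right shape for such a proof, and the empirical-measure convergence step on the finite space $\mathds{V}$ is sound.

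There are, however, two genuine gaps. The main one is the reassignment issue you yourself flag, and your proposed fix does not close it. If an agent is diverted mid-route, its location $v_{r_q}^{\pi_{\text{base}}}$ at the moment of final assignment is an interior point of a geodesic from its last drop-off $\delta_{\text{prev}}$ toward some other pickup; the triangle inequality gives $d(v_{r_q}^{\pi_{\text{base}}},\rho_{r_q}) \geq d(\delta_{\text{prev}},\rho_{r_q}) - d(\delta_{\text{prev}},v_{r_q}^{\pi_{\text{base}}})$, and the subtracted term --- the travel already expended toward the abandoned target --- is exactly the slack that can push $\sum_q d(v_{r_q}^{\pi_{\text{base}}},\rho_{r_q})$ strictly below the min-cost matching between the drop-off and pickup point sets. ``Re-attributing the distance without decreasing the total'' is not a proof; to close this you must either charge that wasted travel against the fleet's time budget $Q_T^{\pi_{\text{base}}}$ rather than ignore it, or bypass the matching picture entirely and argue at the level of couplings: for any assignment rule, the joint law of (agent trip-start location, pickup) is a coupling whose first marginal is asymptotically $p_\delta$, so $E[d(v,\rho)] \geq \mathit{WD}(p_\delta,p_\rho)$ by the very definition of the infimum over $\Gamma(p_\delta,p_\rho)$. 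The second gap is a metric mismatch: $\mathit{WD}$ is defined in the statement with the Euclidean cost $||y-x||$, while your matching costs are graph distances $d(\cdot,\cdot)$; the chain of inequalities only goes the right way under the unstated (but standard) assumption that the graph distance dominates the Euclidean distance between embedded nodes, and this should be said explicitly.
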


We introduce a corollary to Theorem~\ref{theorem:coop_necessary_condition} to get it in the form that will be used in this paper.

\begin{corollary}
    Let $\mathit{WD}(p_\delta, p_\rho)$ denote the first Wasserstein distance \cite{ruschendorf1985wasserstein} between probability distributions $p_{\delta}$ and $p_{\rho}$ with support $\mathds{V}$. Define $D_{\text{min}} \triangleq \mathit{WD}(p_\delta, p_\rho) + E[d(\rho, \delta)]$. Assume that the random variables for pickups $\rho$ and drop-offs $\delta$ are independent. Let $T\to \infty$ and the fleet size $N < E[\eta] \cdot D_{\text{min}}$. Then, the random assignment policy $\hat{\pi}$ is unstable according to Def.~\ref{def:routing_stability}.
    \label{coro:coop_necessary_condition}
\end{corollary}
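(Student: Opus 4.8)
The plan is to mirror the proof of Theorem~\ref{theorem:coop_necessary_condition}, observing that the only place the specific policy $\pi_{\text{base}}$ entered that argument is in lower-bounding the expected distance an assigned agent must travel to reach its request's pickup, and that this same lower bound --- in fact, an even more transparent one --- holds for the random assignment policy $\hat{\pi}$. Since $\hat{\pi}$ here runs on a fully cooperative fleet of $N$ agents, we have $Q_T^{\hat{\pi}} = NT$, so by Def.~\ref{def:routing_stability} it suffices to show that, under $N < E[\eta]\cdot D_{\text{min}}$, we have $E[Z_T^{\hat{\pi}}] > NT$ for all sufficiently large $T$, and hence as $T\to\infty$.

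The core step is a per-request lower bound on $E[W_{r_q}^{\hat{\pi}}]$. Fix a request $r_q$ and decompose $W_{r_q}^{\hat{\pi}} = d(v_{r_q}^{\hat{\pi}},\rho_{r_q}) + d(\rho_{r_q},\delta_{r_q})$. The second term has expectation $E[d(\rho,\delta)]$ by the i.i.d.\ assumption on pickups and drop-offs. For the first term, the defining feature of $\hat{\pi}$ is that the agent handling $r_q$ is drawn uniformly at random from the pool of available agents, independently of agent and request locations; consequently $v_{r_q}^{\hat{\pi}}$ is independent of $\rho_{r_q}$. Moreover, an available agent sits either at its initial position (distributed as $p_\xi$) or at the drop-off node of the last request it served (distributed as $p_\delta$), and as $T\to\infty$ the former, transient, contribution becomes negligible, so the location distribution of the assigned agent converges to $p_\delta$. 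Therefore $E[d(v_{r_q}^{\hat{\pi}},\rho_{r_q})]$ is, in the limit, the transportation cost of the product coupling of $p_\delta$ and $p_\rho$, which is at least $\mathit{WD}(p_\delta,p_\rho)$ because the first Wasserstein distance is the infimum of that cost over all couplings with the prescribed marginals. Hence, for every request entering after a fixed burn-in time and for arbitrary $\epsilon>0$, $E[W_{r_q}^{\hat{\pi}}] \geq \mathit{WD}(p_\delta,p_\rho) + E[d(\rho,\delta)] - \epsilon = D_{\text{min}} - \epsilon$.

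Summing over requests completes the argument. Writing $E[Z_T^{\hat{\pi}}] = E\!\left[\sum_{t=1}^T\sum_{q=R_{t-1}+1}^{R_t} W_{r_q}^{\hat{\pi}}\right]$ and applying the per-request bound together with a Wald-type lower bound on the randomly many summands (legitimate since $d$ is bounded on the finite graph and the request count has mean $E[R_T]=E[\eta]\,T$), we obtain $\liminf_{T\to\infty} E[Z_T^{\hat{\pi}}]/T \geq E[\eta]\cdot D_{\text{min}} > N$, where the strict inequality is precisely the hypothesis $N < E[\eta]\cdot D_{\text{min}}$. Thus $E[Z_T^{\hat{\pi}}] > NT = Q_T^{\hat{\pi}}$ for all large $T$, so $\hat{\pi}$ violates Def.~\ref{def:routing_stability} and is unstable.

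The main obstacle is the one already dispatched in the proof of Theorem~\ref{theorem:coop_necessary_condition}: making rigorous the claim that the empirical distribution of available-agent locations settles to $p_\delta$, so that the Wasserstein bound applies to the agent-to-pickup leg, including controlling the initial-placement transient and the fact that not every agent recycles through a request at every step. I would import that argument wholesale; the only genuinely new observation is that, because $\hat{\pi}$ assigns agents to requests uniformly at random, the assigned-agent location and the pickup location are independent, so the relevant coupling is simply the product measure --- which can only make the agent-to-pickup distance larger than under the cost-minimizing matching used by $\pi_{\text{base}}$, so the instability bound transfers with room to spare. A secondary, routine point is justifying the interchange of expectation and the random sum, which follows from boundedness of $W_{r_q}^{\hat{\pi}}$ on a finite graph.
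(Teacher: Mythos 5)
Your proof is sound, but it takes a genuinely different route from the paper's. The paper proves the corollary as a two-line reduction: it cites the policy-comparison chain $E[Z_{T}^{\hat{\pi}}] \geq E[Z_{T}^{\bar{\pi}}] \geq E[Z_{T}^{\pi_{\text{base}}}]$ and the already-established lower bound $\lim_{T \to \infty} \frac{1}{T} E[Z_{T}^{\pi_{\text{base}}}] > E[\eta]D_{\text{min}}$ from \cite{garces2024approximate}, so instability of $\hat{\pi}$ is inherited from instability of the better policy $\pi_{\text{base}}$ with no new analysis of random assignment. You instead rebuild the lower bound directly for $\hat{\pi}$: you exploit the fact that uniform-at-random assignment makes the assigned agent's location independent of the pickup, so the agent-to-pickup leg is the cost of the \emph{product} coupling of the (limiting) agent-location distribution and $p_\rho$, which dominates the Wasserstein infimum. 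Both arguments are valid. The paper's buys brevity and avoids re-litigating the convergence of available-agent locations to $p_\delta$; yours is self-contained, makes transparent exactly where the independence structure of $\hat{\pi}$ enters, and in principle yields a strictly stronger threshold for random assignment (the product-coupling cost can exceed $\mathit{WD}(p_\delta,p_\rho)$, which is consistent with the paper's own empirical observation that $\hat{\pi}$ destabilizes below the stated threshold). Two caveats on your version: the burn-in/stationarity argument you propose to ``import wholesale'' is doing real work and is not spelled out here (in particular, you need $\mathit{WD}(p_{v},p_\rho) \to \mathit{WD}(p_\delta,p_\rho)$ via the triangle inequality as the agent-location marginal converges, which your $\epsilon$ slack can absorb but should be stated); and the paper defines $\mathit{WD}$ with the Euclidean ground metric while travel times use the graph distance $d$, so your chain needs $d(x,y)\geq \lVert y-x\rVert$ --- an issue inherited from Theorem~\ref{theorem:coop_necessary_condition} rather than introduced by you, but worth flagging if you make the argument self-contained.
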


\begin{proof}
    This follows directly from the fact that $E[Z_{T}^{\hat{\pi}}] \geq E[Z_{T}^{\bar{\pi}}] \geq E[Z_{T}^{\pi_{\text{base}}}]$ as shown in \cite{garces2024approximate}. Since we also know from \cite{garces2024approximate} that $\lim_{T \to \infty} \frac{1}{T} E[Z_{T}^{\pi_{\text{base}}}] > E[\eta]D_{\text{min}}$, we can conclude that $\lim_{T \to \infty} \frac{1}{T} E[Z_{T}^{\hat{\pi}}] \geq \lim_{T \to \infty} \frac{1}{T} E[Z_{T}^{\pi_{\text{base}}}] > E[\eta]D_{\text{min}} > N$. And hence, the fleet size $N$ is smaller than the expected amount of time needed to service all requests (i.e. $N < \frac{1}{T} E[Z_{T}^{\hat{\pi}}]$), violating the condition for stability given in Def.~\ref{def:routing_stability}.
\end{proof}

These theoretical conditions on the fleet size, however, are not directly applicable to cases where the fleet is composed of both adversarial and cooperative agents, since adversarial agents incur a delay that increases the expected amount of time needed to service requests under any policy. For this reason, in this paper, we look for new sufficient conditions on the fleet size for which stability is still guaranteed in the presence of adversarial agents.

\section{Theoretical Results}
\label{sec:theoretical_results}

In this section, we are interested in characterizing a new sufficient condition on the fleet size $N$ that will provably guarantee that the instantaneous assignment policy $\bar{\pi}$ given in Sec.~\ref{sec:policies_def} and formally defined in algo.~\ref{alg:instantaneous_assignment_algo} is stable in the presence of adversarial agents that follow the bounded-delay model given in Def.~\ref{def:adversarial_model}. 
We first characterize the proportion of adversarial agents that will make a simple routing policy like the random assignment policy $\hat{\pi}$, defined in algo.~\ref{alg:random_assignment_algo}, unstable in the presence of bounded-delay adversarial agents. Assuming that the maximum possible proportion of adversarial agents $F_\text{max}$ is known a-priori, we then characterize a new sufficient condition on the fleet size $N$ that will provably recover stability for the the random assignment policy $\hat{\pi}$. Finally we show that since the instantaneous assignment policy $\bar{\pi}$ results in smaller times required to service requests, then the new condition on the fleet size that will guarantee stability for the random assignment policy will also guarantee stability for the instantaneous assignment policy.
Notably, this new sufficient condition shows that we can recover stability for the routing policies even if we add fewer cooperative agents than the total number of adversarial agents in the system.

\subsection{Characterizing the Proportion of Adversarial Agents That Results in Instability of Random Assignment}

In this subsection, we characterize the proportion of adversarial agents that is required to cause instability for the random assignment policy $\hat{\pi}$ defined in algo.~\ref{alg:random_assignment_algo}. To do so, we consider the following assumptions.

\begin{assumption}
    \label{assum:fleet_size}
    We assume that the system is unaware of the presence of adversarial agents in the fleet, and hence the initial fleet size $N'$ is chosen to achieve stability in the case of a fully cooperative fleet. This follows from the sufficient condition presented in Theorem~\ref{theorem:coop_sufficient_condition}.
\end{assumption}

\begin{assumption}
    \label{assum:adversarial_model}
    We assume that all adversarial agents in the system follow the bounded delay model defined in Def.\ref{def:adversarial_model}.
\end{assumption}

\begin{assumption}
    \label{assum:delay}
    We assume that the maximum adversarial delay $\Delta$ is achievable.
\end{assumption}

\begin{assumption}
    \label{assum:independence_of_pick_drop}
    We assume that the random variables for pickups $\rho$ and for drop-offs $\delta$ are independent.
\end{assumption}

Now that we have these assumptions, we move on to the formal claim of the theorem.

\begin{thm}
\label{theorem:instability}
Let $\mathit{WD}(p_\delta, p_\rho)$ denote the first Wasserstein distance \cite{ruschendorf1985wasserstein} between probability distributions $p_{\delta}$ and $p_{\rho}$ with support $\mathds{V}$. Define $D_{\text{min}} \triangleq \mathit{WD}(p_\delta, p_\rho) + E[d(\rho, \delta)]$.
Given Assumptions, \ref{assum:fleet_size} \ref{assum:adversarial_model}, \ref{assum:delay}, \ref{assum:independence_of_pick_drop}, if the proportion of adversarial agents $F$ satisfies $F > \frac{N' - E[\eta] D_{\text{min}}}{2 \Delta E[\eta]}$, then the random assignment policy $\hat{\pi}$ defined in algo.~\ref{alg:random_assignment_algo} is unstable.
\end{thm}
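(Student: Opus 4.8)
\emph{Proof proposal.}
The plan is to show that under the stated hypothesis the random assignment policy $\hat{\pi}$ violates Definition~\ref{def:routing_stability}, i.e. that $E[Z_T^{\hat{\pi}}] > Q_T^{\hat{\pi}}$ for all sufficiently large $T$. I would do this by (i) lower bounding the servicing demand $E[Z_T^{\hat{\pi}}]$ exactly as in Corollary~\ref{coro:coop_necessary_condition}, and (ii) upper bounding the effective servicing capacity $Q_T^{\hat{\pi}}$ by subtracting the time the adversarial agents provably squander, and then checking that the hypothesis on $F$ makes (ii) strictly below (i). Concretely, $F > \frac{N' - E[\eta]D_{\text{min}}}{2\Delta E[\eta]}$ is equivalent to $N' - 2\Delta F E[\eta] < E[\eta] D_{\text{min}}$, so it suffices to establish $\lim_{T\to\infty}\tfrac1T Q_T^{\hat{\pi}} \le N' - 2\Delta F E[\eta]$ and $\lim_{T\to\infty}\tfrac1T E[Z_T^{\hat{\pi}}] \ge E[\eta] D_{\text{min}}$.

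For the capacity bound, by Assumptions~\ref{assum:adversarial_model} and~\ref{assum:delay} together with Definition~\ref{def:adversarial_model}, every adversarial agent that services a request incurs the maximal delay $\Delta$ on its leg to the pickup and a further $\Delta$ on its leg to the drop-off; this $2\Delta$ of travel makes no progress toward servicing any request, so each request handled by an adversarial agent removes $2\Delta$ units of usable time from the fleet. Hence $Q_T^{\hat{\pi}} = N'T - 2\Delta\,M_T^{\mathcal A}$, where $M_T^{\mathcal A}$ is the number of requests serviced by adversarial agents over $[1,T]$. Under $\hat{\pi}$ each request is matched to a uniformly random available agent, and in the counterfactual used to define $Z_T^{\hat{\pi}}$ (where at every step there are at least as many available agents as incoming requests) this matching does not distinguish adversarial from cooperative agents, so the expected fraction of requests serviced by adversarial agents is $|\mathcal A|/N' = F$. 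Since $E[\eta]T$ requests arrive in expectation, $E[M_T^{\mathcal A}] = F E[\eta] T$, giving $\lim_{T\to\infty}\tfrac1T E[Q_T^{\hat{\pi}}] = N' - 2\Delta F E[\eta]$. (Equivalently one may keep $Q_T^{\hat{\pi}}=N'T$ and instead charge the $2\Delta$ per adversarially-serviced request to $Z_T^{\hat{\pi}}$; the demand-minus-supply gap is identical either way.)

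For the demand bound I would invoke the Wasserstein argument behind Theorem~\ref{theorem:coop_necessary_condition} and Corollary~\ref{coro:coop_necessary_condition} essentially verbatim: by Assumption~\ref{assum:independence_of_pick_drop} the pickup and drop-off variables are independent, and since adversarial agents remain parked at their last drop-off until reassigned, the location of any recycled agent (adversarial or cooperative) when it next picks up is distributed as $p_\delta$; the optimal-transport lower bound then forces $E[d(\cdot,\rho)] \ge \mathit{WD}(p_\delta,p_\rho)$, so $E[W_{r_q}^{\hat{\pi}}] \ge \mathit{WD}(p_\delta,p_\rho) + E[d(\rho,\delta)] = D_{\text{min}}$ for each request and $\lim_{T\to\infty}\tfrac1T E[Z_T^{\hat{\pi}}] \ge E[\eta] D_{\text{min}}$. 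Combining with the rearranged hypothesis yields $\lim_{T\to\infty}\tfrac1T Q_T^{\hat{\pi}} \le N' - 2\Delta F E[\eta] < E[\eta] D_{\text{min}} \le \lim_{T\to\infty}\tfrac1T E[Z_T^{\hat{\pi}}]$, so $E[Z_T^{\hat{\pi}}] > Q_T^{\hat{\pi}}$ for all large $T$ and $\hat{\pi}$ is unstable; Assumption~\ref{assum:fleet_size} together with Theorem~\ref{theorem:coop_sufficient_condition} and $D_{\text{min}} \le D_{\text{max}}$ guarantees $N' \ge E[\eta] D_{\text{min}}$, so the threshold is nonnegative and the statement non-vacuous. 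I expect the middle step to be the main obstacle: making the ``fraction $F$ of requests is serviced by adversarial agents'' claim rigorous rather than mean-field, since adversarial agents stay occupied longer per request and hence are available less often, which forces one to lean on the idealized ``enough available agents'' convention in the definition of $Z_T^{\hat{\pi}}$ (or on a worst-case argument over $F$), together with the routine but slightly delicate passage from a comparison of asymptotic rates to the fixed-horizon inequality in Definition~\ref{def:routing_stability}.
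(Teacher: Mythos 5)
Your proposal is correct and follows essentially the same route as the paper: write the effective capacity as $Q_T^{\hat{\pi}} = N'T - E[Z_T^{\text{adv}}]$ with $2\Delta$ wasted per adversarially-serviced request, show the expected per-step waste is $2\Delta F E[\eta]$ via the uniform-matching symmetry (the paper handles the availability bias you flag with a worst-case symmetry argument in Appendix~\ref{appendix:fa_bound}, giving $E[F_q] \le F$), and compare against the demand lower bound $E[\eta]D_{\text{min}}$ from Corollary~\ref{coro:coop_necessary_condition}. The delicate point you identify at the end is precisely the one the paper's appendix is devoted to, so no further work is needed beyond what you describe.
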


\begin{proof}
Define the indicator variable 
\begin{align*}
    \mathds{1}_a(\ell) = \begin{cases}
                        1 & \text{ if agent } \ell \in \mathcal{A} \\
                        0 & \text{ otherwise}
                    \end{cases}
\end{align*}
We use this indicator variable in the definition of $Z^{\text{adv}}_{T}$, the amount of time wasted by adversarial agents following the bounded-delay model defined in Def.~\ref{def:adversarial_model} during a time horizon of length $T$. The expression for $Z^{\text{adv}}_{T}$ is then given by:
\begin{align*}
    Z^{\text{adv}}_{T} = \sum_{q=1}^{R_{T}} \mathds{1}_a(\ell_{r_q}) \left( e^{\text{pick}}(\ell_{r_q}, r_q) + e^{\text{drop}}(\ell_{r_q}, r_q) \right)
\end{align*}
where $\ell_{r_q}$ is the agent assigned to request $r_q$ based on the routing policy.

To prove the statement of the theorem, we consider $Q^{\hat{\pi}}_{T}$, the effective amount of time that a fleet of $N'$ agents following policy $\hat{\pi}$ can dedicate to service requests. We write the expression for $Q^{\hat{\pi}}_{T}$ as $Q^{\hat{\pi}}_{T}= N'T - E[Z^{\text{adv}}_{T}]$, where $N'T$ is the total amount of time that a fleet of $N'$ agents could have spent servicing requests if all agents in the fleet were cooperative, while $E[Z^{\text{adv}}_{T}]$ corresponds to the expected amount of time wasted by the adversarial agents following the bounded-delay model given in Def.~\ref{def:adversarial_model}.

Combining our generalized definition of stability given in Def.~\ref{def:routing_stability} and Corollary~\ref{coro:coop_necessary_condition}, we get that if $\lim_{T \to \infty} \frac{1}{T} Q^{\hat{\pi}}_{T} < E[\eta] D_{\text{min}}$ then the random assignment policy $\hat{\pi}$ is unstable. We will use this condition to derive the threshold on the proportion of adversarial agents beyond which the random assignment policy $\hat{\pi}$ will be unstable.

We first focus on obtaining an expression for $\lim_{T \to \infty} \frac{1}{T} Q^{\hat{\pi}}_{T}$ under the presence of adversarial agents. We define $F_q$ as the proportion of adversarial agents at allocation number $q$. To do so, we consider the following:
\begin{align*}
    \lim_{T \to \infty} \frac{1}{T} Q^{\hat{\pi}}_{T} & \overset{(1)}{=} \lim_{T \to \infty} \frac{1}{T} \left( N'T - E[Z^{\text{adv}}_{T}] \right) \\
    & = N' - \lim_{T \to \infty} \frac{1}{T} E[Z^{\text{adv}}_{T}] \\
    & \overset{(2)}{=} N' - \lim_{T \to \infty} \frac{1}{T} E[ \sum_{q=1}^{R_{T}} \mathds{1}_a(\ell_{r_q}) \\
    & \qquad \qquad \left( e^{\text{pick}}(\ell_{r_q}, r_q) + e^{\text{drop}}(\ell_{r_q}, r_q) \right) ] \\
    & \overset{(3)}{\ge}  N' - \lim_{T \to \infty} \frac{1}{T} E \left[ \sum_{q=1}^{R_{T}} \mathds{1}_a(\ell_{r_q}) (2 \Delta) \right] \\
    & \overset{(4)}{=} N' - \lim_{T \to \infty} \frac{2 \Delta}{T} E \left[ \sum_{q=1}^{R_{T}}  E \left[ \mathds{1}_a(\ell_{r_q}) | R_{T} \right]\right] \\
    & \overset{(5)}{=} N' - \lim_{T \to \infty} \frac{2 \Delta}{T} E \left[ \sum_{q=1}^{R_{T}}  E \left[ E[ \mathds{1}_a(\ell_{r_q}) | F_q] \right]\right] \\
    & \overset{(6)}{=} N' - \lim_{T \to \infty} \frac{2 \Delta}{T} E \left[ \sum_{q=1}^{R_{T}}  E \left[ F_q \right]\right] \\
    & \overset{(7)}{\ge} N' - \lim_{T \to \infty} \frac{2 \Delta}{T} E \left[ \sum_{q=1}^{R_{T}}  F \right] \\
    & \overset{(8)}{=} N' - \lim_{T \to \infty} \frac{2 \Delta}{T} F E[R_{T}] \\
    & \overset{(8)}{=} N' - \lim_{T \to \infty} \frac{2 \Delta}{T} F E\left[\sum_{t=1}^{T} \eta_{t} \right] \\
    & \overset{(10)}{=} N' - \lim_{T \to \infty} \frac{2 \Delta}{T} F (T E[\eta]) \\
    & = N' - 2 \Delta F E[\eta]
\end{align*}
Where (1) follows from the definition of $Q^{\hat{\pi}}_{T}$; (2) follows from the definition of $Z^{\text{adv}}_{T}$; (3) follows from Assumption~\ref{assum:delay} and using the maximal delay adversarial agents can achieve; (4) follows from the law of total expectation and linearity of conditional expectations; (5) follows from the law of total expectation and the fact that $\mathds{1}_a(\ell_{r_q})$ does not depend on $R_T$; (6) follows from the fact that the expectation of an indicator variable is the probability of the event, and the definition of $F_q$; (7) follows form the argument stated in Appendix~\ref{appendix:fa_bound}; (8) follows from the fact that $F$ is a constant; (9) follows from the definition of $R_{T}$; (10) follows from the fact that all random variables $\eta_t$ have the same distribution and hence the same expected value of $E[\eta]$.

Now that we have an expression for $\lim_{T \to \infty} \frac{1}{T} Q^{\hat{\pi}}_{T}$, we can use it to rewrite the condition $\lim_{T \to \infty} \frac{1}{T} Q^{\hat{\pi}}_{T} < E[\eta] D_{\text{min}}$ and obtain the bounding condition on $F$ as follows:
\begin{align*}
    \lim_{T \to \infty} \frac{1}{T} Q^{\hat{\pi}}_{T} & < E[\eta] D_{\text{min}} \\
    N' - 2 \Delta F E[\eta] & < E[\eta] D_{\text{min}} \\
    N' - E[\eta] D_{\text{min}} & < 2 \Delta F E[\eta] \\
    \frac{N' - E[\eta] D_{\text{min}}}{2 \Delta E[\eta]} & < F
\end{align*}

From this, we can conclude that if the proportion of adversarial agents $F$ satisfies the condition $\frac{N' - E[\eta] D_{\text{min}}}{2 \Delta E[\eta]} < F$, then the random assignment policy $\hat{\pi}$ is unstable. This is the case since having a proportion $F$ that satisfies$\frac{N' - E[\eta] D_{\text{min}}}{2 \Delta E[\eta]} < F$ will result in satisfying the condition $\lim_{T \to \infty} \frac{1}{T} Q^{\hat{\pi}}_{T} < E[\eta] D_{\text{min}}$, which by Corollary~\ref{coro:coop_necessary_condition} implies that the policy $\hat{\pi}$ is unstable.

\end{proof}

Intuitively, the threshold on the proportion of adversarial agents $F$ is written as the surplus time that a fleet of $N'$ agents has to service requests per time step divided by the time wasted by adversarial agents per time step. The surplus time in the numerator is expressed in terms of the amount of time that the fleet with $N'$ agents can spend servicing requests in a single time step, which corresponds to $N'$ since we assume that agents can traverse one intersection per time step, minus the expected minimum amount of time needed to service requests for a single time step, which corresponds to  $E[\eta] D_{\text{min}}$. This quantity is then divided by the amount of time wasted by adversarial agents for a single time step, which corresponds to $2 \Delta E[\eta]$. For this bound, if the maximum adversarial delay increases, the proportion of adversarial agents that will result in instability becomes smaller. This happens since a larger delay increases the impact that adversarial agents will have in the system, which means that fewer adversarial agents will be enough to make the system unstable.

\subsection{New Sufficient Condition on the Fleet Size for Stability of a Routing Policy in the Presence of Adversarial Agents}

In this subsection, we derive a new sufficient condition on the fleet size $N$ in order to recover stability for both, the random assignment policy $\hat{\pi}$ and thereby the instantaneous assignment policy $\bar{\pi}$, in the presence of adversarial agents that follow the bounded-delay model. To do this, we consider the following new assumption:

\begin{assumption}
    \label{assum:knowledge_of_adversaries}
    We assume that the maximum possible proportion of adversarial agents $F_{\text{max}}$ is known in advance.
\end{assumption}

Assumption~\ref{assum:knowledge_of_adversaries} is common in the adversarial agents literature \cite{guerrero2017formations,sundaram2010distributed, pasqualetti2011consensus, saulnier2017resilient}, where the maximum tolerable number of adversaries is known a-priori.

We are interested in finding a sufficient condition on the fleet size $N$ that guarantees the stability of the instantaneous assignment policy $\bar{\pi}$ per the definition of stability given in Def.~\ref{def:routing_stability}. 
To do so, we first analyze the random assignment policy $\hat{\pi}$, which is easier to analyze since under the random assignment the request's pickup location $\rho_{r}$ and the location of the agent assigned to the request $v_{r}^{\hat{\pi}}$ become independent. 
During the analysis, we find an upper bound on $\lim_{T\to \infty} \frac{1}{T}E[Z_{T}^{\hat{\pi}}]$, where $E[Z_{T}^{\hat{\pi}}]$ is the expected time required to service requests under the random assignment policy $\hat{\pi}$. We then make sure that the fleet size selected is above this upper bound. Under these conditions, then the policy $\hat{\pi}$ becomes stable by definition. We then move on to show that $E[Z_{T}^{\bar{\pi}}]$, the expected time required to service requests under the instantaneous assignment policy $\bar{\pi}$, is smaller than $E[Z_{T}^{\hat{\pi}}]$. We then show that $\lim_{T\to \infty} \frac{1}{T}E[Z_{T}^{\bar{\pi}}] \leq \lim_{T\to \infty} \frac{1}{T} E[Z_{T}^{\hat{\pi}}]$, and hence the upper bound found on $\lim_{T\to \infty} \frac{1}{T}E[Z_{T}^{\hat{\pi}}]$ also applies to $\lim_{T\to \infty} \frac{1}{T}E[Z_{T}^{\bar{\pi}}]$. From this, we can conclude that $\bar{\pi}$ is also stable by definition as long as the fleet size $N$ is chosen above the derived upper bound. We present the formal claim for the sufficient condition on $N$ for the random assignment policy $\hat{\pi}$ in the following theorem.

\begin{thm}
\label{theorem:new_sufficient_condition_RA}
Given Assumptions~\ref{assum:adversarial_model}, \ref{assum:delay}, and \ref{assum:knowledge_of_adversaries}, if the fleet size $N$ satisfies $N \geq E\left[ \eta  \right]{D_{\max }} + 2\Delta E\left[ \eta  \right] {F_{\text{max}}}$ as $T \to \infty$,
then the random assignment policy $\hat{\pi}$ is stable according to Def.~\ref{def:routing_stability}, where $D_{\text{max}} \triangleq \max\{E[d(\xi, \rho)], E[d(v_{\text{rand}}, \rho)]\} + E[d(\rho, \delta)]$ corresponds to the maximum expected time needed to service a request, and the random variable $v_{\text{rand}}$ with support $\mathds{V}$ represents the location of a random agent that gets assigned to a request after that agent has previously serviced a different request.
\end{thm}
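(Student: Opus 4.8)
The plan is to verify the stability inequality $E[Z_T^{\hat\pi}] \le Q_T^{\hat\pi}$ of Definition~\ref{def:routing_stability} directly, where, exactly as in the proof of Theorem~\ref{theorem:instability}, the effective service capacity of a fleet of $N$ agents under $\hat\pi$ is $Q_T^{\hat\pi} = NT - E[Z_T^{\text{adv}}]$. Equivalently, I would show that $\lim_{T\to\infty}\frac1T ( E[Z_T^{\hat\pi}] + E[Z_T^{\text{adv}}] ) \le N$ and read off the claimed condition on $N$. The left-hand side splits into the \emph{nominal} service demand $E[Z_T^{\hat\pi}]$ (shortest-path travel to pickups and drop-offs) and the \emph{wasted} adversarial time $E[Z_T^{\text{adv}}]$, and I would bound each term separately and add.

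For the nominal term, under random assignment the agent matched to request $r_q$ is chosen uniformly at random, so its location $v_{r_q}^{\hat\pi}$ is independent of the pickup $\rho_{r_q}$; moreover that location is either the agent's initial position (distributed as $\xi$) or, for a previously used agent, the genuine drop-off of its last request (distributed as $v_{\text{rand}}$). This remains true for adversarial agents, since under Definition~\ref{def:adversarial_model} they still terminate at the true drop-off node $\delta_r$ and merely take longer to reach it, so the law of $v_{\text{rand}}$ is unchanged. Hence $E[W_{r_q}^{\hat\pi}] \le \max\{E[d(\xi,\rho)], E[d(v_{\text{rand}},\rho)]\} + E[d(\rho,\delta)] = D_{\max}$ for every $q$, exactly as in Theorem~\ref{theorem:coop_sufficient_condition}. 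Summing over the $R_T$ requests by decomposing into the per-time-step sums $\sum_{q=R_{t-1}+1}^{R_t} W_{r_q}^{\hat\pi}$, conditioning on the i.i.d.\ counts $\eta_t$, and using $E[\eta_t]=E[\eta]$ gives $\lim_{T\to\infty}\frac1T E[Z_T^{\hat\pi}] \le E[\eta] D_{\max}$.

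For the wasted term, the bound is already contained in steps (2)--(10) of the proof of Theorem~\ref{theorem:instability}: using $e^{\text{pick}}+e^{\text{drop}}\le 2\Delta$ (Assumption~\ref{assum:delay}), the tower property over the adversarial fraction $F_q$ at each allocation, the inequality $E[F_q]\le F\le F_{\max}$ (Assumption~\ref{assum:knowledge_of_adversaries} together with Appendix~\ref{appendix:fa_bound}), and $E[R_T]=T E[\eta]$, one obtains $\lim_{T\to\infty}\frac1T E[Z_T^{\text{adv}}] \le 2\Delta E[\eta] F_{\max}$. Adding the two bounds yields $\lim_{T\to\infty}\frac1T ( E[Z_T^{\hat\pi}]+E[Z_T^{\text{adv}}] ) \le E[\eta] D_{\max} + 2\Delta E[\eta] F_{\max}$, so whenever $N \ge E[\eta] D_{\max} + 2\Delta E[\eta] F_{\max}$ we get $E[Z_T^{\hat\pi}] \le NT - E[Z_T^{\text{adv}}] = Q_T^{\hat\pi}$ in the limit, i.e.\ $\hat\pi$ is stable by Definition~\ref{def:routing_stability}.

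I expect the main obstacle to be the first step: making rigorous that the per-request nominal bound $D_{\max}$ from the fully cooperative analysis still applies request-by-request when adversaries are present. The points to nail down are that random assignment keeps $v_{r_q}^{\hat\pi}$ independent of $\rho_{r_q}$ irrespective of agent type, that an adversarial agent's resting location after service is still governed by the true drop-off distribution (so $v_{\text{rand}}$'s law is unaffected by the delay), and that the interchange of the sum over the random horizon $R_T$ with the expectation is justified by conditioning on the i.i.d.\ counts $\eta_t$ --- all under the standing assumption, used in defining $Z_T^{\hat\pi}$, that the fleet always has at least as many available agents as incoming requests, which is self-consistent in the stable regime being established. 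Everything after that is bookkeeping already carried out in Theorems~\ref{theorem:coop_sufficient_condition} and~\ref{theorem:instability}.
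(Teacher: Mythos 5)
Your proposal is correct and arrives at the same bound, but it organizes the accounting differently from the paper. The paper keeps all of the adversarial delay on the \emph{demand} side: it absorbs $e^{\text{pick}}+e^{\text{drop}}$ into the per-request service times (the $W^{\text{adv}}$ terms with $d_{\text{adv}}$), splits the $R_T$ requests into four classes (cooperative/adversarial $\times$ initial/subsequent location), bounds the cooperative classes by $D_{\max}$ and the adversarial classes by $D_{\max}+2\Delta$, and then must estimate how many requests fall into each class --- which is where the coupon-collector argument for the time $t'$ at which all cooperative agents have left their initial positions, and the bound $E[F_{c,k}]\ge F_c$ from Appendix~\ref{appendix:fa_bound}, enter. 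You instead keep the delay on the \emph{supply} side, writing $Q_T^{\hat\pi}=NT-E[Z_T^{\text{adv}}]$ exactly as in the proof of Theorem~\ref{theorem:instability}, so that the nominal term is bounded by $E[\eta]D_{\max}$ uniformly over all requests regardless of agent type (reusing Theorem~\ref{theorem:coop_sufficient_condition}'s argument verbatim) and the wasted term is bounded by $2\Delta E[\eta]F_{\max}$ by recycling steps (2)--(10) of Theorem~\ref{theorem:instability}. Since Definition~\ref{def:adversarial_model} gives $d_{\text{adv}}=d+e$ additively, the two accountings are exactly equivalent ($Z^{\text{total}}=Z^{\hat\pi}+Z^{\text{adv}}$), and your observation that an adversarial agent still terminates at the true drop-off node --- so the law of $v_{\text{rand}}$ is unchanged --- is precisely the implicit assumption the paper makes in its equations for $E[W^{\text{adv}}_{r_{\text{init}}}]$ and $E[W^{\text{adv}}_{r_{\text{next}}}]$. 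What your route buys is a shorter proof that avoids the four-way decomposition and the coupon-collector step entirely; what the paper's route buys is an explicit handle on $E[n_c^{\text{next}}]$, i.e.\ on how the workload actually distributes between cooperative and adversarial agents, which is not needed for the final bound but is more informative. Both routes ultimately rest on the same two lemmas: the per-request $D_{\max}$ bound under random assignment and the symmetry argument of Appendix~\ref{appendix:fa_bound} giving $E[F_q]\le F\le F_{\max}$.
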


\begin{proof}
Consider ${Z_{T}^{\hat{\pi}}}$, the total time needed for the random assignment policy $\hat{\pi}$ to service all requests that enter the system during a time horizon of length $T$. We can rewrite ${Z_{T}^{\hat{\pi}}}$ as,
\begin{align}
    Z_{T}^{\hat{\pi}} = \sum_{q=1}^{R_T} W_{r_q}^{\hat{\pi}} \nonumber
\end{align}
so that we consider requests based on their index irrespective of the time at which the request enter the system.
We reindex requests such that requests that are assigned to agents that haven't serviced any requests come before requests that are assigned to agents that have already serviced one or more requests. Using this, we expand $Z_{T}^{\hat{\pi}}$ as follows:

\begin{align*}
    {Z_{T}^{\hat{\pi}}} & = \sum\limits_{b_1 = 1}^{{n_{c}^{\text{init}}}} {{W^{\hat \pi}_{{r_{b_1}}}}}  + \sum\limits_{b_2 = {n_{c}^{\text{init}}} + 1}^{{n_{c}^{\text{init}}} + {n_{c}^{\text{next}}}} {{W^{\hat \pi}_{{r_{b_2}}}}} \nonumber \\
& + \sum\limits_{b_3 = {n_{c}^{\text{init}}} + {n_{c}^{\text{next}}} + 1}^{{n_{c}^{\text{init}}} + {n_{c}^{\text{next}}} + {n_{a}^{\text{init}}}} {{W^{\hat \pi}_{{r_{b_3}} }}}  + \sum\limits_{b_4 = {n_{c}^{\text{init}}} + {n_{c}^{\text{next}}} + {n_{a}^{\text{init}}} + 1}^{{R_{_T}}} {{W^{\hat \pi}_{{r_{b_4}} }}}
\end{align*}

Where $n_{c}^{\text{init}}$, $n_{c}^{\text{next}}$ denote the number of requests that are allocated to cooperative agents that start servicing a request from their initial locations or from a location after having serviced a previous request, respectively. Similarly, $n_{a}^{\text{init}}$, $n_{a}^{\text{next}}$ denote the number of requests that are allocated to adversarial agents that start servicing a request from their initial locations or from a location after having serviced a previous request, respectively. It is worth noting that $n_{a}^{\text{next}}= {{R_T} - {n_{c}^{\text{init}}} - {n_{c}^{\text{next}}} - {n_{a}^{\text{init}}}}$.
Taking the expectation over ${Z_{T}^{\hat{\pi}}}$ and using the linearity of expectations yields,

\begin{equation}
\begin{array}{l}
E\left[ {{{Z_{T}^{\hat{\pi}}}}} \right]  = E\left[ {\sum\limits_{b_1 = 1}^{{n_{c}^{\text{init}}}} {{W^{\hat \pi}_{{r_{b_1}} }}} } \right] + E\left[ {\sum\limits_{b_2 = {n_{c}^{\text{init}}} + 1}^{{n_{c}^{\text{init}}} + {n_{c}^{\text{next}}}} {{W^{\hat \pi}_{{r_{b_2}} }}} } \right] +\\ 
E\left[ {\sum\limits_{b_3 = {n_{c}^{\text{init}}} + {n_{c}^{\text{next}}} + 1}^{{n_{c}^{\text{init}}} + {n_{c}^{\text{next}}} + {n_{a}^{\text{init}}}} {{W^{\hat \pi}_{{r_{b_3}} }}} } \right] + E\left[ {\sum\limits_{b_4 = {n_{c}^{\text{init}}} + {n_{c}^{\text{next}}} + {n_{a}^{\text{init}}} + 1}^{{R_{_T}}} {{W^{\hat \pi}_{{r_{b_4}} }}} } \right] \\ \mathop  = \limits^{(1)}  
E\left[ {E\left[ {\left. {\sum\limits_{b_1 = 1}^{{n_{c}^{\text{init}}}} {{W^{\hat \pi}_{{r_{b_1}} }}} } \right|{n_{c}^{\text{init}}}} \right]} \right] + \\ E\left[ {E\left[ {\left. {\sum\limits_{b_2 = {n_{c}^{\text{init}}} + 1}^{{m_{c}^{\text{init}}} + {n_{c}^{\text{next}}}} {{W^{\hat \pi}_{{r_{b_2}} }}} } \right|{n_{c}^{\text{init}}},{n_{c}^{\text{next}}}} \right]} \right] + \\ 
E\left[ {E\left[ {\left. {\sum\limits_{b_3 = {n_{c}^{\text{init}}} + {n_{c}^{\text{next}}} + 1}^{{n_{c}^{\text{init}}} + {n_{c}^{\text{next}}} + {n_{a}^{\text{init}}}} {{W^{\text{adv}}_{r_{b_3}}}}} \right|{n_{c}^{\text{init}}},{n_{c}^{\text{next}}},{n_{a}^{\text{init}}}} \right]} \right] + \\ E\left[ {E\left[ {\left. {\sum\limits_{b_4 = {n_{c}^{\text{init}}} + {n_{c}^{\text{next}}} + {n_{a}^{\text{init}}} + 1}^{{R_T}} {W^{\text{adv}}_{r_{b_4}}}} \right|{n_{c}^{\text{init}}},{n_{c}^{\text{next}}},{n_{a}^{\text{init}}},{R_T}} \right]} \right] \\  \mathop = \limits^{(2)}
 E\left[ {\sum\limits_{b_1 = 1}^{{n_{c}^{\text{init}}}} {E\left[ {\left. {{W^{\hat \pi}_{{r_{b_1}} }}} \right|{n_{c}^{\text{init}}}} \right]} } \right] + \\ E\left[ {\sum\limits_{b_2 = {n_{c}^{\text{init}}} + 1}^{{n_{c}^{\text{init}}} + {n_{c}^{\text{next}}}} {E\left[ {\left. {{W^{\hat \pi}_{{r_{b_2}} }}} \right|{n_{c}^{\text{init}}},{n_{c}^{\text{next}}}} \right]} } \right]  + \\
E\left[ {\sum\limits_{b_3 = {n_{c}^{\text{init}}} + {n_{cm}} + 1}^{{n_{c}^{\text{init}}} + {n_{c}^{\text{next}}} + {n_{a}^{\text{init}}}} {E\left[ {\left. {{W^{\text{adv}}_{r_{b_3}}}} \right|{n_{c}^{\text{init}}},{n_{c}^{\text{next}}},{n_{a}^{\text{init}}}} \right]} } \right] + \\ E\left[ {\sum\limits_{b_4 = {n_{c}^{\text{init}}} + {n_{c}^{\text{next}}} + {n_{a}^{\text{init}}} + 1}^{{R_T}} {E\left[ {\left. {{W^{\text{adv}}_{r_{b_4}}}} \right|{n_{c}^{\text{init}}},{n_{c}^{\text{next}}},{n_{a}^{\text{init}}},{R_T}} \right]} } \right]
\end{array} 
\label{e101}  
\end{equation}

Where in $(1)$ we use the law of total expectations and in $(2)$ the linearity of conditional expectations to move the expectation operator inside the summation. 
We define $W^{\hat{\pi}}_{r_{\text{init}}} = d(v_{r_{\text{init}}}^{\hat{\pi}}, \rho_{r_{\text{init}}}) + d(\rho_{r_{\text{init}}}, \delta_{r_{\text{init}}})$, where the random variables $v_{r_{\text{init}}}^{\hat{\pi}}$, $\rho_{r_{\text{init}}}$, and $\delta_{r_{\text{init}}}$ have the same distributions as $\xi$, $\rho$, and $\delta$, respectively. We also define $W_{r_{\text{next}}}^{\hat{\pi}} = d(v_{r_{\text{next}}}^{\hat{\pi}}, \rho_{r_{\text{next}}}) + d(\rho_{r_{\text{next}}}, \delta_{r_{\text{next}}})$, where the random variables $v_{r_{\text{next}}}^{\hat{\pi}}$, $\rho_{r_{\text{next}}}$, and $\delta_{r_{\text{next}}}$ have the same distributions as $v_{\text{rand}}$, $\rho$, and $\delta$, respectively.
Similarly, we define ${W^{\text{adv}}_{r_{\text{init}}}} = d_{\text{adv}}(v_{r_{\text{init}}}^{\hat{\pi}}, \rho_{r_{\text{init}}}) + d_{\text{adv}}(\rho_{r_{\text{init}}}, \delta_{r_{\text{init}}})$ where the random variables $v_{r_{\text{init}}}^{\hat{\pi}}$, $\rho_{r_{\text{init}}}$, and $\delta_{r_{\text{init}}}$ have the same distributions as $\xi$, $\rho$, and $\delta$, respectively. Finally, we define  ${W^{\text{adv}}_{r_{\text{next}}}} = d_{\text{adv}}(v_{r_{\text{next}}}^{\hat{\pi}}, \rho_{r_{\text{next}}}) + d_{\text{adv}}(\rho_{r_{\text{next}}}, \delta_{r_{\text{next}}})$, where the random variables $v_{r_{\text{next}}}^{\hat{\pi}}$, $\rho_{r_{\text{next}}}$, and $\delta_{r_{\text{next}}}$ have the same distributions as $v_{\text{rand}}$, $\rho$, and $\delta$, respectively.
Under policy $\hat{\pi}$, we have the following: ${{W^{\hat \pi}_{{r_{b_1}} }}}$ is independent of $n_{c}^{\text{init}}$; ${{W^{\hat \pi}_{{r_{b_2}} }}}$ is independent of $n_{c}^{\text{init}}, n_{c}^{\text{next}}$; ${W^{\text{adv}}_{r_{b_3}}}$ is independent of $n_{c}^{\text{init}},n_{c}^{\text{next}}, n_{a}^{\text{init}}$; and ${W^{\text{adv}}_{r_{b_4}}}$ is independent of $n_{c}^{\text{init}}, n_{c}^{\text{next}}, n_{a}^{\text{init}},{R_T}$. Using these independent relations, equation \ref{e101} yields, 

\begin{equation}
\begin{array}{l}
E\left[ {\sum\limits_{b_1 = 1}^{{n_{c}^{\text{init}}}} {E\left[ {{W^{\hat \pi}_{{r_{b_1}} }}} \right]} } \right] + E\left[ {\sum\limits_{b_2 = {n_{c}^{\text{init}}} + 1}^{{n_{c}^{\text{init}}} + {n_{c}^{\text{next}}}} {E\left[ {{W^{\hat \pi}_{{r_{b_2}} }}} \right]} } \right] + \\
E\left[ {\sum\limits_{b_3 = {n_{c}^{\text{init}}} + {n_{c}^{\text{next}}} + 1}^{{n_{c}^{\text{init}}} + {n_{c}^{\text{next}}} + {n_{a}^{\text{init}}}} {E\left[ {{W^{\hat \pi}_{{r_{b_3}} }}} \right]} } \right] + \\ E\left[ {\sum\limits_{b_4 = {n_{c}^{\text{init}}} + {n_{c}^{\text{next}}} + {n_{a}^{\text{init}}} + 1}^{{R_T}} {E\left[ {{W^{\hat \pi}_{{r_{b_4}} }}} \right]} } \right] \\ \mathop  = \limits^{(3)}
E\left[ {{n_{c}^{\text{init}}}} \right]E\left[ {{W^{\hat \pi}_{{r_{\text{init}}} }}} \right]  + E\left[ {{n_{c}^{\text{next}}}} \right]E\left[ {{W^{\hat \pi}_{{r_{\text{next}}} }}} \right] +\\ E\left[ {{n_{a}^{\text{init}}}} \right]E\left[ {W^{\text{adv}}_{r_{\text{init}}}} \right] + \\ E\left[ {{R_T} - {n_{c}^{\text{init}}} - {n_{c}^{\text{next}}} - {n_{a}^{\text{init}}}} \right]E\left[ {W^{\text{adv}}_{r_{\text{next}}}} \right]
\end{array} 
\label{e220} 
\end{equation}
Where in $(3)$ we use the fact that all $W^{\hat \pi}_{r_{b_1}}$ have the same distribution as $W^{\hat \pi}_{r_{\text{init}}}$, all $W^{\hat \pi}_{r_{b_2}}$ have the same distribution as $W^{\hat \pi}_{r_{\text{next}}}$, all $W^{\hat \pi}_{r_{b_3}}$ have the same distribution as ${W^{\text{adv}}_{r_{\text{init}}}}$, and all of $W^{\hat \pi}_{r_{b_4}}$ have the same distribution as ${W^{\text{adv}}_{r_{\text{next}}}}$. We will analyze each of the terms independently and then combine the results. 

We obtain an upper bound for  
$E\left[ {{W^{\hat \pi}_{{r_{\text{init}}} }}} \right]$ using the following analysis,
\begin{equation}
\begin{array}{l}
E\left[ {{W^{\hat \pi}_{{r_{\text{init}}} }}} \right] = E\left[ {d\left( {v_{r_{\text{init}}}^{\hat{\pi}},{\rho _{{r_{\text{init}}}}}} \right)} \right] + E\left[ {d\left( {{\rho _{{r_{\text{init}}}}},{\delta _{{r_{\text{init}}}}}} \right)} \right] \\=  E\left[ {d\left( {\xi ,\rho } \right)} \right] + E\left[ {d\left( {\rho ,\delta } \right)} \right]  \\\le
 \max \left\{ {E\left[ {d\left( {\xi ,\rho } \right)} \right],E\left[ {d\left( {{v_{\text{rand}}},\rho } \right)} \right]} \right\} + E\left[ {d\left( {\rho ,\delta } \right)} \right] = {D_{\max }}
\end{array}   
\label{e221} 
\end{equation}

Using similar arguments we can get an upper bound for $E\left[ {{W^{\hat \pi}_{{r_{\text{next}}} }}} \right]$ as follows:
\begin{equation}
\begin{array}{l}
E\left[ {{W^{\hat \pi}_{{r_{\text{next}}} }}} \right]= E\left[ {d\left( {v_{r_{\text{next}}}^{\hat{\pi}},{\rho _{{r_{\text{next}}}}}} \right)} \right] + E\left[ {d\left( {{\rho _{{r_{\text{next}}}}},{\delta _{{r_{\text{next}}}}}} \right)} \right]  \\ 
 = E\left[ {d\left( {{v_{\text{rand}}},\rho } \right)} \right] + E\left[ {d\left( {\rho ,\delta } \right)} \right]  \\\le
\max \left\{ {E\left[ {d\left( {\xi ,\rho } \right)} \right],E\left[ {d\left( {{v_{\text{rand}}},\rho } \right)} \right]} \right\} + E\left[ {d\left( {\rho ,\delta } \right)} \right] = {D_{\max }}
\end{array}  
\label{e222} 
\end{equation}
Now if we consider the values for the adversarial agents, we can get an upper bound to  $E\left[ {W^{\text{adv}}_{r_{\text{init}}}} \right]$ as follows:
\begin{equation}
\begin{array}{l}
E\left[ {W^{\text{adv}}_{r_{\text{init}}}} \right] = E\left[ {d_{\text{adv}}\left( {v_{r_{\text{init}}}^{\hat{\pi}},{\rho _{{r_{\text{init}}}}}} \right)} \right] + E\left[ {d_{\text{adv}}\left( {{\rho _{{r_{\text{init}}}}},{\delta _{{r_{\text{init}}}}}} \right)} \right] \\=  E\left[ {d_{\text{adv}}\left( {\xi ,\rho } \right)} \right] + E\left[ {d_{\text{adv}}\left( {\rho ,\delta } \right)} \right]\\ \leq E\left[ {d\left( {\xi ,\rho } \right)} \right] + E\left[ {d\left( {\rho ,\delta } \right)} \right] + 2\Delta  \\ \le
 \max \left\{ {E\left[ {{d}\left( {\xi ,\rho } \right)} \right],E\left[ {{d}\left( {{v_{\text{rand}}},\rho } \right)} \right]} \right\} + E\left[ {{d_{\text{adv}}}\left( {\rho ,\delta } \right)} \right] + 2\Delta\\=  {D_{\max }} + 2\Delta
\end{array}
\label{e223}    
\end{equation}
Similarly, we can get an upper bound for $E\left[ {W^{\text{adv}}_{r_{\text{next}}}} \right]$ as follows:
\begin{equation}
\begin{array}{l}
E\left[ {W^{\text{adv}}_{r_{\text{next}}}} \right]= E\left[ {d_{\text{adv}}\left( {v_{r_{\text{next}}}^{\hat{\pi}},{\rho _{{r_{\text{next}}}}}} \right)} \right] +  E\left[ {d_{\text{adv}}\left( {{\rho _{{r_{\text{next}}}}},{\delta _{{r_{\text{next}}}}}} \right)} \right] \\= E\left[ {d_{\text{adv}}\left( {{v_{\text{rand}}},\rho } \right)} \right] + E\left[ {d_{\text{adv}}\left( {\rho ,\delta } \right)} \right] \\ \leq E\left[ {d\left( {{v_{\text{rand}}},\rho } \right)} \right] + E\left[ {d\left( {\rho ,\delta } \right)} \right] +2\Delta \\ \le \max \left\{ {E\left[ {{d}\left( {\xi ,\rho } \right)} \right],E\left[ {{d}\left( {{v_{\text{rand}}},\rho } \right)} \right]} \right\} + E\left[ {{d_{\text{adv}}}\left( {\rho ,\delta } \right)} \right]+2\Delta \\ = {D_{\max }} + 2\Delta
\label{e106}   
\end{array}
\end{equation} 
Since the expected number of requests that are assigned to cooperative agents starting from their initial locations is upper bounded by the number of cooperative agents, we have that $E\left[ {{n_{c}^{\text{init}}}} \right] \leq |\mathcal{C}|$ and using the same reasoning the expected number of requests that are assigned to adversarial agents starting from their initial locations is upper bounded by the number of adversarial agents such that $E\left[ {{n_{a}^{\text{init}}}} \right] \leq |\mathcal{A}|$. Hence, using the developed inequalities of equations \ref{e221},\ref{e222},\ref{e223}, and \ref{e106}, we have,
\begin{equation}
\begin{array}{l}
E\left[  {Z_{T}^{\hat{\pi}}} \right] \le |\mathcal{C}|{D_{\max }} + E\left[ {{n_{c}^{\text{next}}}} \right]{D_{\max }} + |\mathcal{A}|\left( {{D_{\max }} + 2\Delta } \right) + \\ E\left[ {{R_T} - {n_{c}^{\text{init}}} - {n_{c}^{\text{next}}} - {n_{a}^{\text{init}}}} \right]\left( {{D_{\max }} + 2\Delta } \right)
\end{array}
\label{e107}   
\end{equation}
We now divide by $T$ and take the limit as $T \to \infty$ of the following terms that appear on the right hand side of equation \ref{e107},
\begin{equation}
\mathop {\lim }\limits_{T \to \infty } \frac{{{|\mathcal{C}|}{D_{\max }}}}{T} \to 0
\label{e108}     
\end{equation}
Where equation \ref{e108} holds since both $|\mathcal{C}|$ and $D_{\text{max}}$ are finite,
\begin{equation}
\mathop {\lim }\limits_{T \to \infty } \frac{{|\mathcal{A}|}\left( {{D_{\max }} + 2\Delta } \right)}{T} \to 0
\label{e109}     
\end{equation}
Where equation \ref{e109} holds since $|\mathcal{A}|, D_{\text{max}}, \Delta$ are finite,
\begin{equation}
\mathop {\lim }\limits_{T \to \infty } \frac{{E\left[ {{n_{c}^{\text{init}}}} \right]{\left( {{D_{\max }} + 2\Delta } \right)}}}{T} \leq \frac{|\mathcal{C}|{\left( {{D_{\max }} + 2\Delta } \right)}}{T} \to 0
\label{e110}     
\end{equation}
Where equation \ref{e110} holds since $|\mathcal{A}|, D_{\text{max}}, \Delta$ are finite,
\begin{equation}
\mathop {\lim }\limits_{T \to \infty } \frac{{E\left[ {{n_{a}^{\text{init}}}} \right]{\left( {{D_{\max }} + 2\Delta } \right)}}}{T} 
\leq \frac{|\mathcal{A}|{\left( {{D_{\max }} + 2\Delta } \right)}}{T} \to 0
\label{e111}     
\end{equation}
Using the results obtained from equations \ref{e108}, \ref{e109}, and \ref{e110}, we get:
\begin{equation}
\begin{array}{l}
\mathop {\lim }\limits_{T \to \infty } \frac{E\left[  {Z_{T}^{\hat{\pi}}} \right]}{T}  \\ \leq \mathop {\lim }\limits_{T \to \infty } \frac{1}{T}\left( {E\left[ {{n_{c}^{\text{next}}}} \right]{D_{\max }} + E\left[ {{R_T} - {n_{c}^{\text{next}}}} \right]{\left( {{D_{\max }} + 2\Delta } \right)}} \right) 
\end{array}
\label{e112}    
\end{equation}
Denote by time $t'$ the time at which, almost surely, all cooperative agents were assigned at least one request by the random assignment policy and hence are no longer at their initial location. We know that such a time exists since the number of cooperative agents is finite and the time $ T \to \infty$, which means that with probability $1.0$ all the cooperative agents will be assigned to at least one request and will move from their initial positions. Furthermore, based on the known number of cooperative agents, $|\mathcal{C}|$, time $t'$ can be analytically calculated as a solution to the coupon collector's problem \cite{mitzenmacher2017probability} using the formula $E\left[t'\right] = \frac{{{|\mathcal{C}|}}}{{{|\mathcal{C}|}}} + \frac{{{|\mathcal{C}|}}}{{{|\mathcal{C}|} - 1}} + ... + \frac{{{|\mathcal{C}|}}}{1}$. Denote by ${\eta _{c,{t_f}}}$ the number of requests that are assigned to cooperative agents at time $t_f$. Using this definition, we can write $n_{c}^{\text{next}} = \sum\limits_{{t_f} = t'}^T {{\eta _{c,{t_f}}}}$. We now analyze the following expression:
\begin{equation}
\begin{array}{l}
\mathop {\lim }\limits_{T \to \infty } \frac{{E\left[ {{n_{c}^{\text{next}}}} \right]}}{T} \mathop  = \limits^{(4)} \mathop {\lim }\limits_{T \to \infty } \frac{1}{T} {E\left[ {\sum\limits_{{t_f} = t'}^T {{\eta _{c,{t_f}}}}  } \right]}   \\ \mathop  = \limits^{(5)}  \mathop {\lim }\limits_{T \to \infty } \frac{1}{T}{E\left[ {E\left[ {\left. {\sum\limits_{{t_f} = t'}^T {{\eta _{c,{t_f}}}} } \right|t'} \right]} \right]}   \\  \mathop  = \limits^{(6)} \mathop {\lim }\limits_{T \to \infty } \frac{1}{T}{E\left[ {\sum\limits_{{t_f} = t'}^T {E\left[ {\left. {{\eta _{c,{t_f}}}} \right|t'} \right]} } \right]} \\ \mathop  =  \limits^{(7)} \mathop {\lim }\limits_{T \to \infty } \frac{1}{T}{E\left[ {\sum\limits_{{t_f} = t'}^T {E\left[ {{\eta _{c,{t_f}}}} \right]} } \right]} 
\end{array}
\label{e113}    
\end{equation}
In $(4)$ we express $n_{c}^{\text{next}}$ according to the sum of the number of requests that are assigned to cooperative agents at each time step following time $t'$. In $(5)$ we use the law of total expectations. In $(6)$ we use the linearity of conditional expectations and in $(7)$ we use the fact that ${\eta _{c,{t_f}}}$ is independent of $t'$. 

We further develop the expectation inside the sum,  ${E\left[ {{\eta _{c,{t_f}}}} \right]}$. We define the indicator variable
\begin{align*}
    \mathds{1}_c(\ell) = \begin{cases}
                        1 & \text{ if agent } \ell \in \mathcal{C} \\
                        0 & \text{ otherwise}
                    \end{cases}
\end{align*}
We also define $\ell_{r^{\hat{\pi}}_q}$ as the agent that gets assigned to request $r_q$ according to policy $\hat{\pi}$, $F_{c, k}$ as the proportion of cooperative agents at allocation $k$, and by $F_c$ the overall proportion of cooperative agents in the fleet, where $F_c= 1- F$. We look at the expectation of assigned requests at time $t=t_f$,
\begin{align}
    E\left[ {{\eta _{c,{t_f}}}} \right] & \overset{(8)}{=} E\left[ {\sum\limits_{k = 1}^{{\eta_{t_f}}} \mathds{1}_c(\ell_{r^{\hat{\pi}}_q}) } \right] \nonumber \nonumber \\
    & \overset{(9)}{=} E\left[ {E\left[ {\left. {\sum\limits_{k = 1}^{{\eta _{{t_f}}}} \mathds{1}_c(\ell_{r^{\hat{\pi}}_q}) } \right|{\eta _{{t_f}}}} \right]} \right] \nonumber \\
    & \overset{(10)}{=} E\left[ {\sum\limits_{k = 1}^{{\eta_{t_f}}} {E\left[ {\left. \mathds{1}_c(\ell_{r^{\hat{\pi}}_q}) \right|{\eta_{t_f}}} \right]} } \right] \nonumber \\
    & \overset{(11)}{=} E\left[ {\sum\limits_{k = 1}^{{\eta_{t_f}}} {E\left[ E[\mathds{1}_c(\ell_{r^{\hat{\pi}}_q}) | F_{c,k}] \right]} } \right] \nonumber \\
    & \overset{(12)}{=} E\left[ {\sum\limits_{k = 1}^{{\eta_{t_f}}} {E\left[ F_{c,k} \right]} } \right] \nonumber \\
    & \overset{(13)}{\geq} E\left[ {\sum\limits_{k = 1}^{{\eta_{t_f}}} {F_c} } \right] \nonumber  \\
    & \overset{(14)}{=} {F_c} E\left[ {{\eta_{t_f}}} \right] \nonumber \\
    & \overset{(15)}{= } {F_c}E\left[ \eta  \right]
    \label{e225}
\end{align}

 In $(8)$ we describe $\eta _{c,{t}}$ in terms of $\mathds{1}_c(\ell_{r^{\hat{\pi}}_q, })$. In $(9)$ we use the law of total expectations, then in $(10)$ we use the linearity of conditional expectations. In $(11)$ we use the law of total expectations, then in $(12)$ we use the fact that the expectation of the indicator function is the probability that an agent is cooperative in the fleet at assignment $k$, which corresponds to $F_{c,k}$. In $(13)$ we use the argument stated in Appendix~\ref{appendix:fa_bound}. 
 In $(14)$ we use the fact that $F_c$ is a constant and hence can be moved outside of the expectation, and at $(15)$ we assume that the requests that enter the system at each time step are all identically distributed with the same distribution of the random variable $\eta$. Therefore, equation \ref{e113} yields,
\begin{equation}
\begin{array}{l}
\mathop {\lim }\limits_{T \to \infty } \frac{1}{T}\left( {\sum\limits_{{t_f} = t'}^T {E\left[ {{\eta _{c,{t_f}}}} \right]} } \right) \mathop \geq \limits^{(15)} \mathop {\lim }\limits_{T \to \infty } \frac{1}{T}E\left[ {\sum\limits_{{t_f} = t'}^T {{F_c}E\left[ \eta  \right]} } \right] \\ \mathop =\limits^{(16)} \mathop {\lim }\limits_{T \to \infty } \frac{1}{T}\left( {{F_c}E\left[ \eta  \right]T - {F_c}E\left[ \eta  \right]E\left[ {t'} \right]} \right) \mathop =\limits^{(17)} {F_c}E\left[ \eta  \right]
\end{array}
\label{e230}
\end{equation}
Where in $(15)$ we substitute the result from equation \ref{e225}, in $(16)$ we use the fact that $E[T]=T$ since $T$ is a constant and in $(17)$ the second term goes to zero as $T \to \infty$ since ${{F_c}, E\left[ \eta  \right], E\left[ {t'} \right]}$ are all finite.
We now analyze the second remaining expression in equation \ref{e112} as follows,
\begin{equation}
\begin{array}{l}
E\left[ {{R_T} - {n_{c}^{\text{next}}}} \right]{\left( {{D_{\max }} + 2\Delta } \right)}\\ \mathop =\limits^{(18)} 
\left( {E\left[ {{R_T}} \right] - E\left[ {{n_{c}^{\text{next}}}} \right]} \right)\left( {{D_{\max }} + 2\Delta } \right) \\ \mathop =\limits^{(19)} \left( {{D_{\max }} + 2\Delta } \right)\left( {E\left[ {\sum\limits_{{t_i} = 1}^T {{\eta _{{t_i}}}} } \right] - E\left[ {{n_{c}^{\text{next}}}} \right]} \right)\\
 \mathop =\limits^{(20)} \left( {{D_{\max }} + 2\Delta } \right)\left( {\sum\limits_{{t_i} = 1}^T {E\left[ \eta  \right] - E\left[ {{n_{c}^{\text{next}}}} \right]} } \right) \\\mathop \leq\limits^{(21)}  \left( {{D_{\max }} + 2\Delta } \right)\left( {TE\left[ \eta  \right] - {F_c}E\left[ \eta  \right]T + {F_c}E\left[ \eta  \right]E\left[ {t'} \right]} \right)
\end{array}
\label{e115}   
\end{equation}
Where in $(18)$ we use the linearity of expectations, in $(19)$ we use the definition of $R_T$, in $(20)$ we use the linearity of the expectation and the sum operator and the fact that all requests that enter the system at each time step are all identically distributed with the same distribution of the random variable $\eta$. In $(21)$ we use the inequality  $E[n_{c}^{\text{next}}] \geq {{F_c}E\left[ \eta  \right]T - {F_c}E\left[ \eta  \right]E\left[ {t'} \right]}$ obtained in equation \ref{e230}.
Taking the limit as $T \to \infty$ for the expression in (\ref{e115}) yields,
\begin{equation}
\begin{array}{l}
\mathop {\lim }\limits_{T \to \infty } \frac{1}{T}\left( {{D_{\max }} + 2\Delta } \right)\left( {TE\left[ \eta  \right] - {F_c}E\left[ \eta  \right]T + {F_c}E\left[ \eta  \right]E\left[ {t'} \right]} \right) \\ \mathop =\limits^{(22)} E\left[ \eta  \right]\left( {{D_{\max }} + 2\Delta } \right)\left( {1 - {F_c}} \right)
\end{array}
\label{e121}   
\end{equation}
Where $(22)$ follows from the fact that as $T \to \infty$ the term  ${F_c}E\left[ \eta  \right]E\left[ {t'} \right]$ goes to zero since  ${{F_c}, E\left[ \eta  \right], E\left[ {t'} \right]}$ are all finite. Therefore we obtain that,
\begin{equation}
\begin{array}{l}
\mathop {\lim }\limits_{T \to \infty } \frac{E\left[  {Z_{T}^{\hat{\pi}}} \right]}{T} \mathop \le\limits^{(23)} E\left[ \eta  \right]{F_c}{D_{\max }} + E\left[ \eta  \right]\left( {{D_{\max }} + 2\Delta } \right)\left( {1 - {F_c}} \right) \\ \mathop =\limits^{(24)} E\left[ \eta  \right]{F_c}{D_{\max }} + E\left[ \eta  \right]{D_{\max }} + 2\Delta E\left[ \eta  \right] - E\left[ \eta  \right]{F_c}{D_{\max }} \\ - E\left[ \eta  \right]2\Delta {F_c} \mathop =\limits^{(25)} 
E\left[ \eta  \right]{D_{\max }} + 2\Delta E\left[ \eta  \right] - E\left[ \eta  \right]2\Delta {F_c} \\ \mathop =\limits^{(26)} E\left[ \eta  \right]{D_{\max }} + 2\Delta E\left[ \eta  \right]\left( {1 - {F_c}} \right) \\
=
E\left[ \eta  \right]{D_{\max }} + 2\Delta E\left[ \eta  \right]F \mathop \le\limits^{(27)} E\left[ \eta  \right]{D_{\max }} + 2\Delta E\left[ \eta  \right]{F_{\text{max}}}
\end{array}
\label{e116}    
\end{equation}
Where $(23)$ follows from substituting the results obtained in equations \ref{e230} and \ref{e121} into equation \ref{e112}, $(24)-(26)$ follow from rearranging terms and $(27)$ follows from using Assumption \ref{assum:knowledge_of_adversaries}.  

Now that we have found an upper bound on $\mathop {\lim }\limits_{T \to \infty } \frac{E\left[  {Z_{T}^{\hat{\pi}}} \right]}{T}$, if we choose $N$ such that
\begin{equation}
N \geq E\left[ \eta  \right]{D_{\max }} + 2\Delta E\left[ \eta  \right]{F_{\text{max}}}
\label{e117}    
\end{equation}
We will have that $\mathop {\lim }\limits_{T \to \infty } \frac{{E\left[  {Z_{T}^{\hat{\pi}}} \right]
}}{T} \leq N$, and policy $\hat{\pi}$ will be stable by definition Def.~\ref{def:routing_stability}.
\end{proof}

We can express the probability distribution $p_{v_{\text{rand}}}$ for ${v_{\text{rand}}}$ using the marginalization of the probabilities of the previous pickup-dropoff combinations, and therefore all the terms in $D_{\text{max}}$ can be calculated using historical request data. Intuitively, the sufficient condition is stating that the amount of time that a fleet of $N$ agents can spend servicing requests per time step (which corresponds to $N$ because of our assumption that agents travel one edge per time step) has to be greater than the maximum expected time needed to service requests per time step $E\left[ \eta  \right]{D_{\max }}$ plus the maximum amount of time wasted by adversarial agents per time step $2\Delta E\left[ \eta  \right] {F_{\text{max}}}$.

We use the result from Theorem~\ref{theorem:new_sufficient_condition_RA} to show that the same conditions on the fleet size $N$ will guarantee stability for the instantaneous assignment policy $\bar{\pi}$. We formalize this claim in the following theorem. 

\begin{thm}
\label{theorem:new_sufficient_condition_IA}
Given Assumptions~\ref{assum:adversarial_model}, \ref{assum:delay}, and \ref{assum:knowledge_of_adversaries}, if the fleet size $N$ satisfies the condition given in Theorem~\ref{theorem:new_sufficient_condition_RA}, then the instantaneous assignment policy $\bar{\pi}$ defined in algo.~\ref{alg:instantaneous_assignment_algo} is stable according to Def.~\ref{def:routing_stability}.
\end{thm}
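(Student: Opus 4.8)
The plan is to reduce this statement to Theorem~\ref{theorem:new_sufficient_condition_RA} by establishing that the instantaneous assignment policy is, asymptotically, never costlier in expected service time than random assignment, i.e. $\lim_{T\to\infty}\frac{1}{T}E[Z_{T}^{\bar{\pi}}] \le \lim_{T\to\infty}\frac{1}{T}E[Z_{T}^{\hat{\pi}}]$. Once this domination is in hand, a fleet size $N$ satisfying the condition of Theorem~\ref{theorem:new_sufficient_condition_RA} gives $N \ge E[\eta]D_{\max} + 2\Delta E[\eta]F_{\text{max}} \ge \lim_{T\to\infty}\frac{1}{T}E[Z_{T}^{\hat{\pi}}] \ge \lim_{T\to\infty}\frac{1}{T}E[Z_{T}^{\bar{\pi}}]$, where the middle inequality is exactly the upper bound derived in the proof of Theorem~\ref{theorem:new_sufficient_condition_RA}; hence $\bar{\pi}$ satisfies the requirement of Def.~\ref{def:routing_stability} and is stable, by the same reasoning used to conclude the proof of Theorem~\ref{theorem:new_sufficient_condition_RA}.

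To prove the domination $E[Z_{T}^{\bar{\pi}}] \le E[Z_{T}^{\hat{\pi}}]$, I would decompose the service time of request $r_q$ under any policy $\pi$ as $W_{r_q}^{\pi} = d(v_{r_q}^{\pi},\rho_{r_q}) + d(\rho_{r_q},\delta_{r_q}) + \mathds{1}_a(\ell_{r_q})\bigl(e^{\text{pick}}(\ell_{r_q},r_q) + e^{\text{drop}}(\ell_{r_q},r_q)\bigr)$, and couple the $\bar{\pi}$-run and the $\hat{\pi}$-run on a common realization of the request process $(\eta_t,\rho,\delta)$ and, by Assumption~\ref{assum:delay}, on worst-case delays $e^{\text{pick}}=e^{\text{drop}}=\Delta$. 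The drop-off terms $d(\rho_{r_q},\delta_{r_q})$ are then identical across the two runs, so the comparison reduces to the pickup-travel terms $\sum_q d(v_{r_q}^{\pi},\rho_{r_q})$ and the adversarial surcharge $2\Delta\sum_q \mathds{1}_a(\ell_{r_q})$. For the travel terms I would invoke (and, if needed, re-verify) the coupling argument of \cite{garces2024approximate}: maintain, inductively over time steps, a type-preserving bijection between the available agents of the two runs under which each busy $\bar{\pi}$-agent is no farther from its current target than its $\hat{\pi}$-counterpart; the per-step minimum-total-distance matching solved by $\bar{\pi}$ is precisely what lets this invariant propagate and yields $\sum_q d(v_{r_q}^{\bar{\pi}},\rho_{r_q}) \le \sum_q d(v_{r_q}^{\hat{\pi}},\rho_{r_q})$.

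The main obstacle is the adversarial surcharge $2\Delta\sum_q \mathds{1}_a(\ell_{r_q})$, i.e. the total number of requests serviced by adversaries over the horizon: instantaneous assignment optimizes only nominal travel distance and is blind to adversary identity, so a priori it need not route fewer requests to adversaries than random assignment does. Resolving this is the crux. One route is to observe that, marginally, every available agent — cooperative or adversarial — sits either at its initial location (distribution $\xi$) or at a past drop-off (distribution $\delta$), independently of the incoming pickup, so the minimum-distance matching carries no systematic bias toward adversaries, and the expected number of adversary-serviced requests over $[0,T]$ remains $F\,E[\eta]\,T + O(1)$, as in the argument of Appendix~\ref{appendix:fa_bound}. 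A safer fallback that avoids the pathwise surcharge comparison altogether is to rerun the computation of Theorem~\ref{theorem:new_sufficient_condition_RA} directly for $\bar{\pi}$: the per-request bounds $E[W^{\bar{\pi}}_{r_{\text{init}}}] \le D_{\max}$ and $E[W^{\bar{\pi}}_{r_{\text{next}}}] \le D_{\max}$ still hold because a minimum-cost match leaves the assigned agent no farther from the pickup than a uniformly random available agent, whose expected distance to a random pickup is at most $\max\{E[d(\xi,\rho)],E[d(v_{\text{rand}},\rho)]\}$, and the adversary-serviced counts are controlled exactly as above; the remaining algebra is then identical to the chain of equalities leading to the bound in Theorem~\ref{theorem:new_sufficient_condition_RA}. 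Either way, one obtains $\lim_{T\to\infty}\frac{1}{T}E[Z_{T}^{\bar{\pi}}] \le E[\eta]D_{\max} + 2\Delta E[\eta]F_{\text{max}} \le N$, and stability of $\bar{\pi}$ follows from Def.~\ref{def:routing_stability}.
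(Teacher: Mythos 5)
Your overall strategy is the same as the paper's: reduce Theorem~\ref{theorem:new_sufficient_condition_IA} to Theorem~\ref{theorem:new_sufficient_condition_RA} by showing $E[Z_{T}^{\bar{\pi}}] \le E[Z_{T}^{\hat{\pi}}]$ and then reusing the upper bound $E[\eta]D_{\max}+2\Delta E[\eta]F_{\text{max}}$. Where you diverge is in how the domination is established. The paper defines a single policy-independent travel-time function $d'$ that already equals the delayed distance of Def.~\ref{def:adversarial_model} whenever the traveling agent is adversarial, writes $Z_T^{\bar{\pi}}$ as a per-time-step minimization of the total $d'$-cost over assignments, and concludes via the elementary chain $E[\text{random assignment cost}] \ge \min_{\vec{v}} E[\text{cost}] \ge E[\min_{\vec{v}} \text{cost}]$ (Appendix~\ref{appendix:min_inequality}). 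Because the adversarial delay is folded into $d'$ identically for both policies, the ``adversarial surcharge'' you isolate as the crux never appears as a separate term: it is part of the objective being minimized, so no comparison of adversary-serviced request counts between the two runs is needed. Your decomposition into nominal travel plus $2\Delta\sum_q \mathds{1}_a(\ell_{r_q})$ manufactures exactly the difficulty you then have to work around, and your first resolution (``no systematic bias toward adversaries'') is asserted rather than proved; your fallback of rerunning the Theorem~\ref{theorem:new_sufficient_condition_RA} computation directly for $\bar{\pi}$ would work but is considerably heavier than the paper's three-line argument.

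That said, your crux is not a phantom: it points at a real subtlety the paper's modeling quietly absorbs. Algorithm~\ref{alg:instantaneous_assignment_algo} matches on nominal distances and cannot know $\mathcal{A}$, so the implemented policy minimizes $\sum_q d(v_{r_q},\rho_{r_q})+d(\rho_{r_q},\delta_{r_q})$, not the delayed cost $\sum_q d'(v_{r_q},\rho_{r_q})+d'(\rho_{r_q},\delta_{r_q})$ appearing in equation~\ref{e120}; the paper's proof identifies $Z_T^{\bar{\pi}}$ with the minimizer of the latter. Your instinct that one must separately control how many requests each policy sends to adversaries is the price of analyzing the algorithm as actually specified. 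If you pursue that route, make the bias argument rigorous via the symmetrization of Appendix~\ref{appendix:fa_bound} rather than leaving it at the level of a plausibility claim; as written, that step is the one genuine gap in your proposal.
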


\begin{proof}
The above analysis considered random assignments of agents to requests using policy $\hat \pi$. We next show that using instantaneous assignment as the base policy, denoted by $\bar \pi$, instead of the random assignment policy results in lesser expected times to service all requests, and hence constitutes a better base policy.  Denote by ${Z_{T}^{\bar{\pi}}}$ the time needed for an agent fleet to service all requests that enter the system during a horizon of length $T$, where requests are assigned to agents based on policy $\bar \pi$. 

We define as $d'$, the time it takes for any agent to travel between locations. This distance is the shortest path for cooperative agents and the delayed distance given in Def.~\ref{def:adversarial_model} for the adversarial agents.

The total time that the fleet travels to service all requests from the start of the mission until time $T$, when requests are assigned according to the instantaneous assignment base policy, $\bar \pi$, is hence given by,
\begin{equation}
\begin{array}{l}
{Z_{T}^{\bar{\pi}}} = {\sum\limits_{t = 1}^T {\mathop {\min }\limits_{{ v_{R_{t-1}+1}},...,{v_{R_t}}} \sum\limits_{q = {{R_{t - 1}} + 1}}^{{R_{t}}} {d'\left( {v_{r_{q}},{\rho _{{r_q}}}} \right) + d'\left( {{\rho _{{r_q}}},{\delta _{{r_q}}}} \right)} } }
\end{array}
\label{e120} 
\end{equation}
where $v_{r_{q}}$ corresponds to the location of the agent that got assigned to request $r_q$ based on the minimization.
The total time the fleet travels to service all requests from the start of the mission until time $T$, when requests are assigned according to the random assignment base policy, $\hat \pi$, is given by,
\begin{equation}
{Z_{T}^{\hat{\pi}}} = \sum\limits_{t = 1}^T {\sum\limits_{q = {R_{t - 1}} + 1}^{{R_{t}}} {d{'}\left( {v_{r_{q}}^{\hat{\pi}},{\rho _{{r_q}}}} \right) + d{'}\left( {{\rho _{{r_q}}},{\delta _{{r_q}}}} \right)} } 
\end{equation}
where $v_{r_{q}}^{\hat{\pi}}$ corresponds to the location of the agent that got assigned to request $r_q$ based on the random assignment policy $\hat{\pi}$.
We now analyze the expected distance the fleet travels to service all requests according to the random assignment base policy $\hat \pi$, $E\left[ {Z_{T}^{\hat{\pi}}}\right]$, given by,
\begin{equation}
\begin{array}{l}
E\left[ Z_{T}^{\hat{\pi}} \right] \mathop =\limits^{(1)} E\left[ {\sum\limits_{t = 1}^T {\sum\limits_{q = {R_{t - 1}} + 1}^{{R_{t}}} {d{'}\left( {v_{r_{q}}^{\hat{\pi}},{\rho _{{r_q}}}} \right) + d{'}\left( {{\rho _{{r_q}}},{\delta _{{r_q}}}} \right)} } } \right]\\  
\mathop =\limits^{(2)} \sum\limits_{t = 1}^T {E\left[ {\sum\limits_{q = {R_{t - 1}} + 1}^{{R_{t}}} {d{'}\left( {v_{r_{q}}^{\hat{\pi}},{\rho _{{r_q}}}} \right) + d{'}\left( {{\rho _{{r_q}}},{\delta _{{r_q}}}} \right)} } \right]} \\ 
\mathop \ge\limits^{(3)}   \sum\limits_{t = 1}^T {\mathop {\min }\limits_{{v_{R_{t-1} + 1}},...,{v_{R_t}}} E\left[ {\sum\limits_{q = {R_{t - 1}} + 1}^{{R_{t}}} {d{'}\left( {v_{r_{q}},{\rho _{{r_q}}}} \right) + d{'}\left( {{\rho _{{r_q}}},{\delta _{{r_q}}}} \right)} } \right]} \\ 
\mathop \ge\limits^{(4)} E\left[ {\sum\limits_{t = 1}^T {\mathop {\min }\limits_{{v_{R_{t-1} + 1}},...,{v_{R_t}}} } \sum\limits_{q = {R_{t - 1}} + 1}^{{R_{t}}} {d{'}\left( {v_{r_{q}},{\rho _{{r_q}}}} \right) + d{'}\left( {{\rho _{{r_q}}},{\delta _{{r_q}}}} \right)} } \right]   \\
= E\left[ Z_{T}^{\bar{\pi}} \right] 
\end{array}  
\end{equation}
In $(1)$ the expectation is applied to the definition of ${Z_{T}^{\hat{\pi}}}$, and in $(2)$ we use the linearity of expectations. In $(3)$ we choose the assignment of agents to requests that minimizes the expected distance traveled at each time step by the fleet. To prove $(4)$ we use the argument presented in Appendix~\ref{appendix:min_inequality}. The last equality follows from the expression for ${Z_{T}^{\bar{\pi}}}$ given in equation \ref{e120}. Therefore we establish that application of the instantaneous assignment base policy results in a smaller expected distance to service all requests that enter the system compared to the random assignment policy. Hence, choosing the number of cooperative agents according to the bound obtained for the random assignment policy in Theorem~\ref{theorem:new_sufficient_condition_RA} ensures stability for the instantaneous assignment base policy as well. 
\end{proof}

\section{Case Study and Empirical Results: Autonomous Taxicab Routing in San Francisco}
In this section we present empirical experiments performed to validate our theoretical claims (see Sec.~\ref{sec:theoretical_results}). We consider a case study on transportation that uses real ride requests for San Francisco's taxi service \cite{piorkowski2009crawdad}.

We first validate the theoretical results presented in Theorem~\ref{theorem:instability} by executing the random assignment policy for a large time horizon and plotting the number of outstanding requests at each time step. Our empirical results show that once the proportion of adversaries surpasses the threshold given in Theorem~\ref{theorem:instability}, the random assignment policy $\hat{\pi}$ becomes unstable accumulating outstanding requests over time. In addition to this, we also empirically show that a similar proportion of adversarial agents will result in instability for the instantaneous assignment policy $\bar{\pi}$. We include this empirical result to motivate potential future work on the characterization of the proportion of adversarial agents that will make instantaneous assignment unstable.

We then validate the results presented in Theorem~\ref{theorem:new_sufficient_condition_RA} and Theorem~\ref{theorem:new_sufficient_condition_IA} by executing both the instantaneous assignment and the random assignment policies. We consider two fleet sizes, the fleet size that satisfies the sufficient condition for a fully cooperative fleet \cite{garces2024approximate}, and our new sufficiently large fleet size. 
Our empirical results show that if we choose the fleet size according to the sufficient condition on Theorem~\ref{theorem:new_sufficient_condition_RA}, both instantaneous assignment and random assignment can recover stability, having a uniformly bounded number of outstanding requests over time. It is important to note that the increase in the number of agents in the fleet is smaller than the number of adversaries in the original fleet, showing that we can still recover stability by inserting fewer agents in the system than the number that would be required to replace all adversaries.

\subsection{Implementation Details}

\begin{wrapfigure}[12]{l}{0.42\linewidth}
\vspace{-10pt}
\centering
\includegraphics[width=0.99\linewidth]{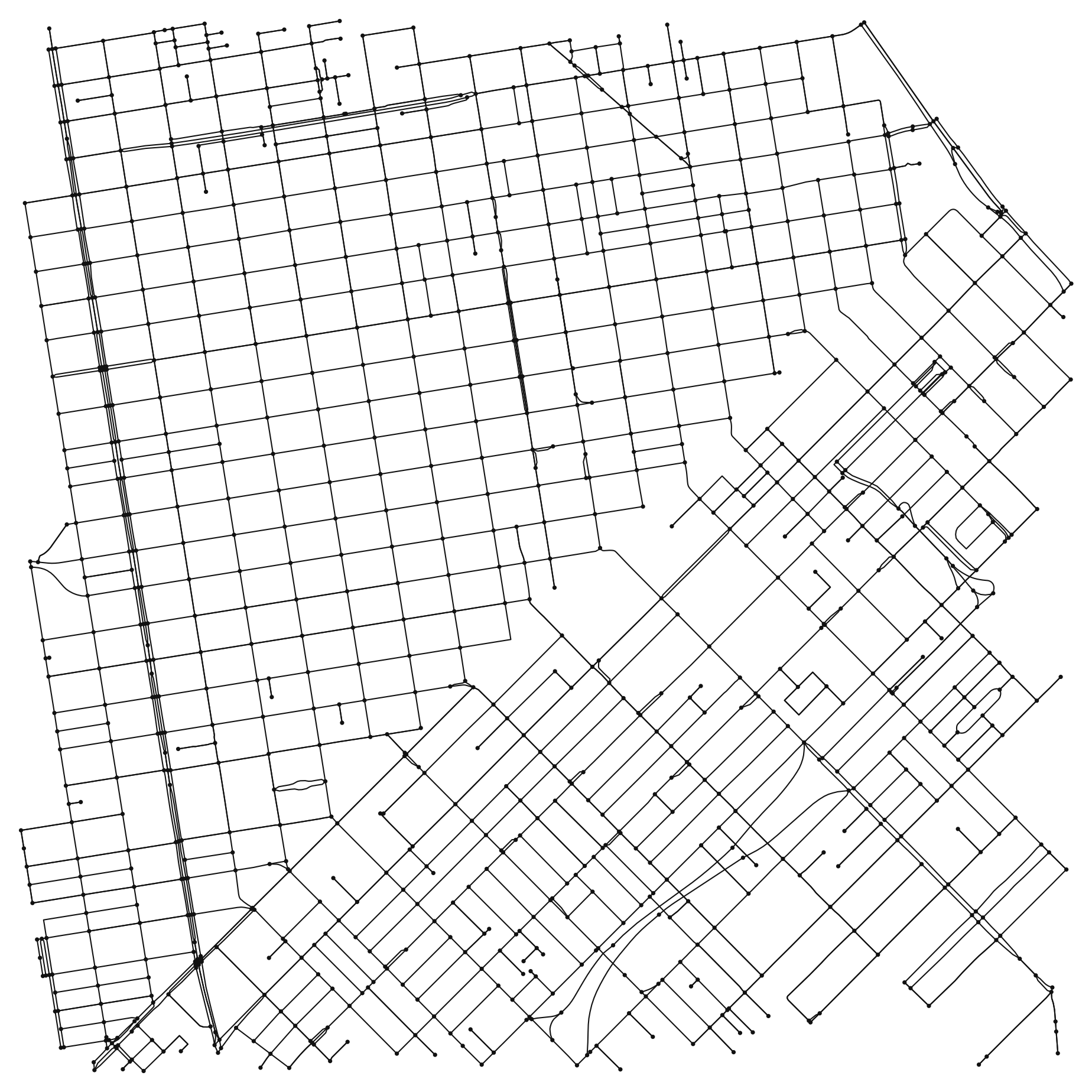} 
\vspace{-20pt}
\caption{\small{Map of the environment used for numerical simulations.}}
\label{fig:sf_map}
\end{wrapfigure}

For our simulation environment, we consider a section of San Francisco with a radius of $1500$ meters centered around the financial district (see Fig.~\ref{fig:sf_map}). This region has $1026$ intersections and $2300$ directed streets. 

We assume that each time step in the system corresponds to $1$ minute. We choose $\Delta=15$, which corresponds to adversarial agents having a maximum deviation of $15$ minutes. For our simulation, we assume that all adversarial agents execute actions that will achieve the maximum deviation $\Delta$ for both the route executed for picking up the assigned request and the route executed to drop-off the assigned request. Since taking time to infinity is not possible in the simulation, we consider a large finite horizon of $720$ time steps, which corresponds to $720$ minutes ($12$ hours). For the experiments where the maximum proportion of adversarial agents is known, we set $F_{\text{max}} = F$, and assume that the proportion of adversarial agents $F$ stays fixed over the entire time horizon $T$.

All experiments were executed in an AMD Threadripper Pro WRX80. Policies are evaluated on $100$ different instantiations of the random variables associated with the requests, and results are reported as averages over these $100$ runs.

\subsection{Estimating Probability Distributions}
\label{subsec:estimating_prob_dist}

Similarly to \cite{garces2024approximate}, we estimate the requests probability distributions using historical request data. More specifically, we obtain estimates of $p_\rho$ and $p_\delta$ using the relative frequency of historical requests that were picked up and dropped off within the section of San Francisco chosen for the experiment. We obtain an estimate for the probability distribution $p_\eta$ by considering the relative frequency of the number of requests that enter the system at each minute.

We set $p_\xi$ the distribution of the initial locations of the agents to be $p_{\delta}$, in order to model the possibility of agents already being deployed according to the distribution of requests' drop-offs. This would be the case if we consider that the current time horizon is starting after the end of a previous time horizon, while the agents are still circulating the streets.

Using these assumptions and the estimation procedure described above, we get that that $E[\eta] \approx 1.02$, $E[d(\xi, \rho)] \approx 17.47$, $E[d(v_{\text{rand}}, \rho)] \approx 17.62$, $E[d(\rho, \delta)] \approx 16.27$, and $\mathit{WD}(\delta, \rho) \approx 1.09$, where we approximate the Wasserstein distance using the procedure described in \cite{spieser2014}. 

\subsection{Proportion of Adversarial Agents That Provably Results in Instability}
In this subsection, we look at the effect of adversarial agents on stability. Using the expected values presented in Sec.~\ref{subsec:estimating_prob_dist}, we get that the sufficiently large fleet size needed for a fully cooperative fleet of agents (see Theorem~\ref{theorem:coop_sufficient_condition}) is $N' \geq 35$. Using the result from Theorem~\ref{theorem:instability}, we also get that the proportion of adversarial agents that will make random assignment with such a fleet size unstable is $F > 1/2$. 

\begin{figure}
    \centering
    \vspace{5pt}
    \includegraphics[width=0.99\linewidth]{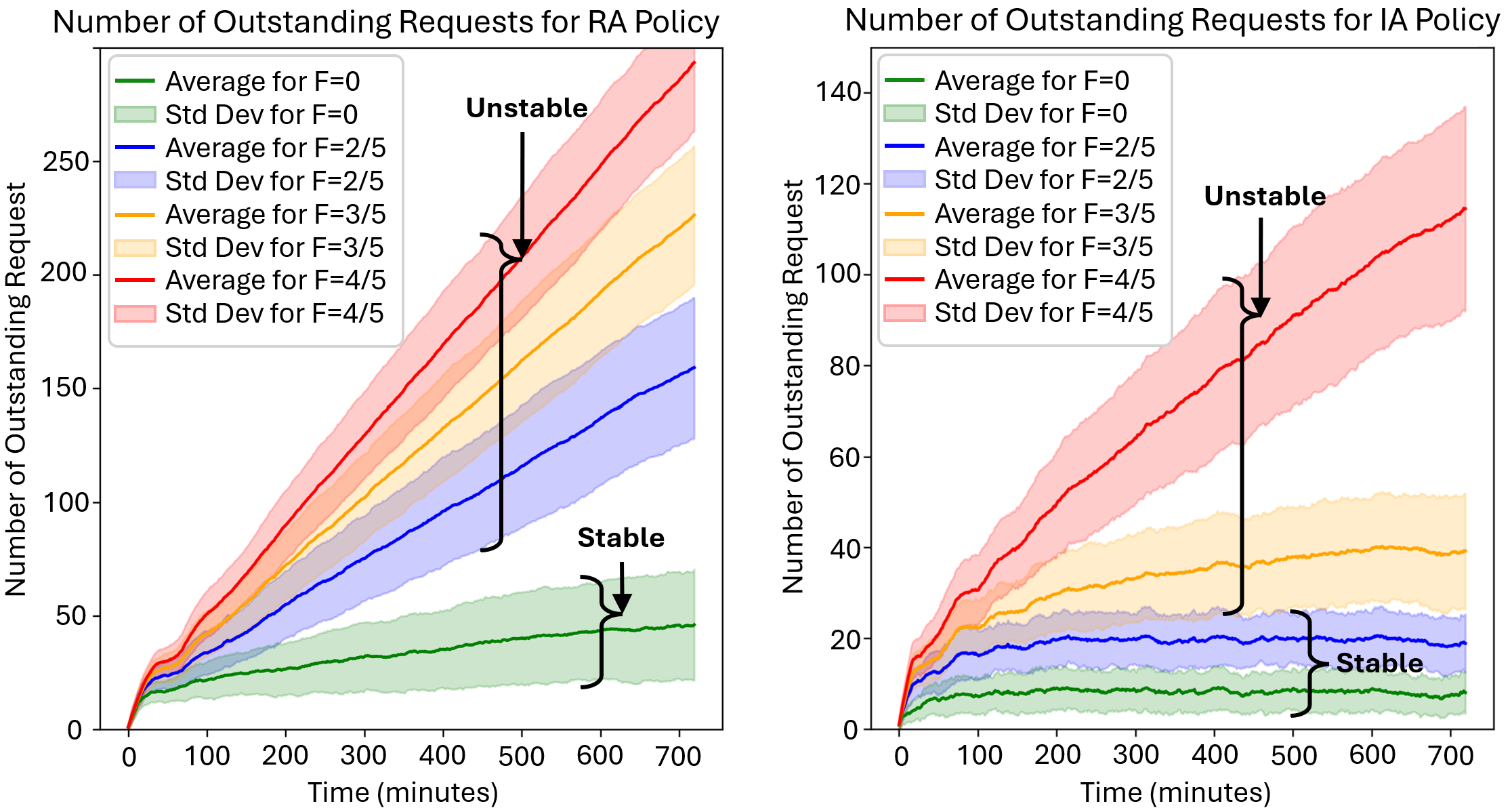}
    \vspace{-10pt}
    \caption{Number of outstanding requests for random assignment (RA) and Instantaneous Assignment (IA) policies with a fleet size of $35$ agents and different proportions of adversarial agents $F$.}
    \label{ref:instability_results}
\end{figure}

In Fig.~\ref{ref:instability_results} (left), we empirically show that once the proportion of adversarial agents surpasses the sufficient condition $F > 1/2$, the random assignment policy $\hat{\pi}$ is unstable for a fleet size of $35$ agents. From the results, we also get that even for a proportion of adversarial agents of $2/5$, the random assignment policy is still unstable. This happens due to the fact that the necessary condition (Corollary~\ref{coro:coop_necessary_condition}) used to derive the proportion of adversarial agents $F$ is derived using a loose lower bound. The lower bound was originally designed for instantaneous assignment with reassignment, and it is being applied to random assignment. Random assignment as a routing policy results in assignments with significantly higher travel times than the assignments obtained from instantaneous assignment with reassignment. We also include the plot of the number of outstanding requests as a function of time for the random policy with no adversarial agents (i.e. $F=0$) for completeness. This line starts growing at earlier time steps, but eventually plateaus, showing that the policy eventually becomes stable.

We also present results on the instantaneous assignment policy $\hat{\pi}$ in Fig.~\ref{ref:instability_results} (right). As shown in the figure, once the proportion of adversarial agents surpasses the sufficient condition $F > 1/2$, the instantaneous assignment policy becomes unstable. In the figure, lines for both $F=3/5$ and $F=4/5$ grow unboundedly. However, the line for $F=4/5$ grows faster. For a zoomed in depiction of these lines, please refer to Fig.~\ref{ref:ia_stability_results}. We use these results as motivation for future work, where we will try to characterize a sufficient condition on $F$ for the instantaneous assignment policy.

After having verified that adversarial agents induce instability, we will now verify that the new sufficient condition on the fleet size when we assume that we know the proportion of adversarial agents presented in Theorems~\ref{theorem:new_sufficient_condition_RA} and \ref{theorem:new_sufficient_condition_IA} maintains the number of outstanding requests uniformly bounded over time in practice, guaranteeing stability.

\subsection{New Sufficient Condition on the Fleet Size for Stability}
In this subsection, we look at the new sufficient condition for stability introduced in Theorems~\ref{theorem:new_sufficient_condition_RA}, and \ref{theorem:new_sufficient_condition_IA}. Using the expected values presented in Sec.~\ref{subsec:estimating_prob_dist}, we get that the sufficiently large fleet for both instantaneous assignment and random assignment when $F=2/5$ is $N\geq 47$, when $F=3/5$ is $N \geq 53$, and when $F=4/5$ is $N \geq 59$. 

\begin{figure*}[ht]
    \centering
    \vspace{5pt}
    \includegraphics[width=0.73\textwidth]{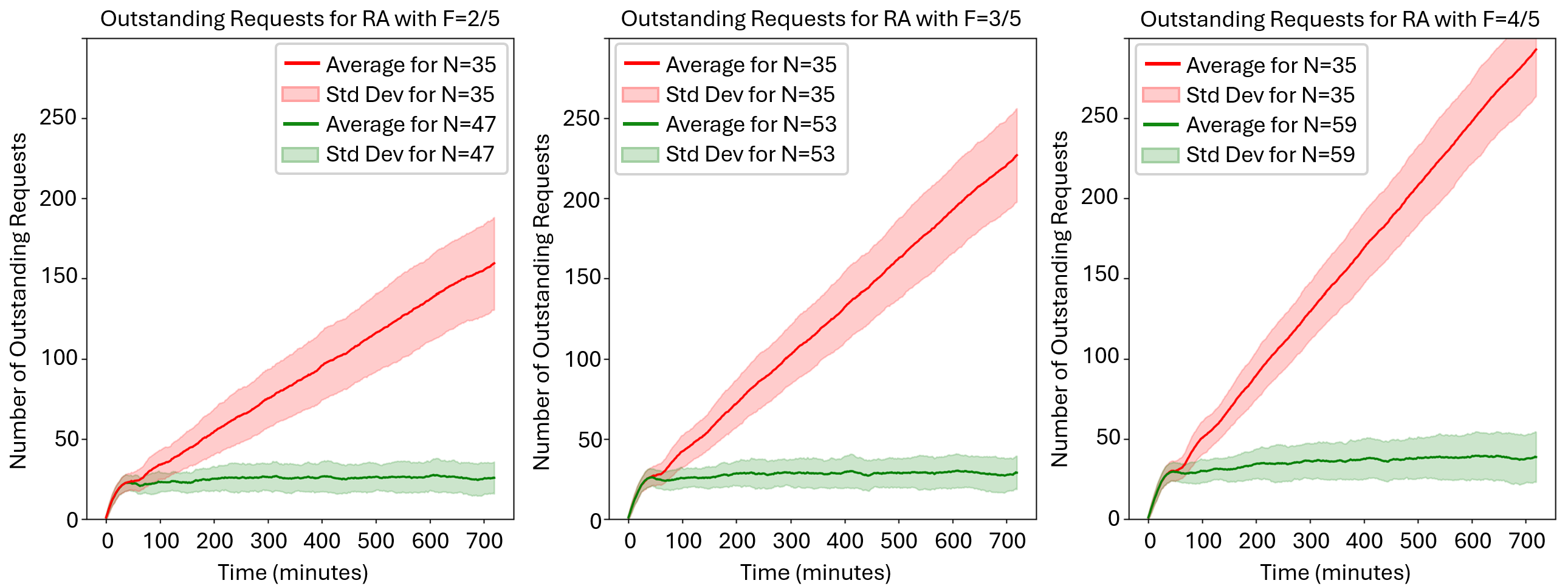}
    \vspace{-10pt}
    \caption{Number of outstanding requests for Random Assignment (RA) policy with different proportions of adversarial agents $F$ and different fleet sizes $N$.}
    \label{ref:rand_stability_results}
\end{figure*}

In Fig.~\ref{ref:rand_stability_results} we present the number of outstanding requests for the random assignment policy $\hat{\pi}$ for the three sufficiently large fleet sizes calculated above. We also include results for $N=35$ for all three proportions of adversarial agents to show that the sufficiently large fleet size for an all cooperative fleet is unstable for the proportions of adversarial agents displayed. The results in Fig.~\ref{ref:rand_stability_results} show that by increasing the fleet size by $2\Delta E\left[ \eta  \right] {F}$, we can provably recover stability for the random assignment policy $\hat{\pi}$ as predicted by Theorem~\ref{theorem:new_sufficient_condition_RA}.

\begin{figure*}[ht]
    \centering
    \includegraphics[width=0.73\textwidth]{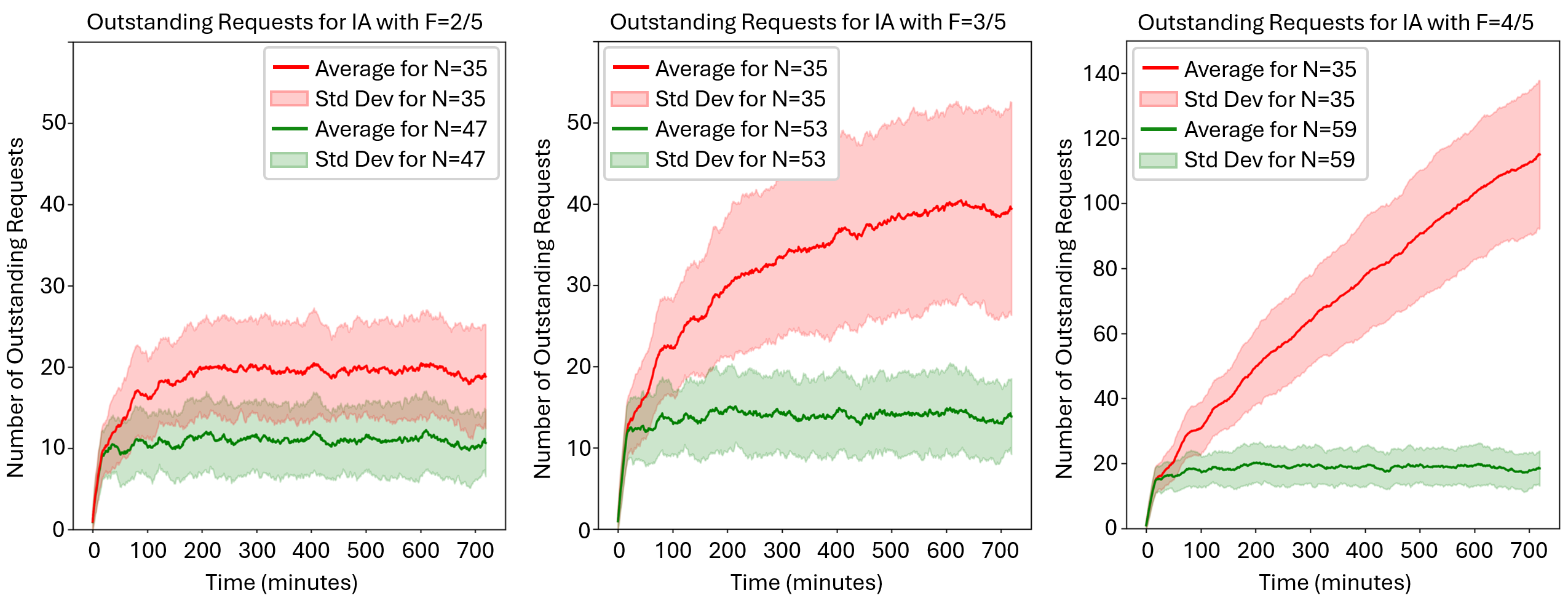}
    \vspace{-10pt}
    \caption{Number of outstanding requests for instantaneous assignment (IA) policy with different proportions of adversarial agents $F$ and different fleet sizes $N$. Note that the vertical axis scale is different for all three sub-figures, and the vertical axis scale is smaller than the one for Fig.~\ref{ref:rand_stability_results}}
    \label{ref:ia_stability_results}
\end{figure*}

Similarly, in Fig.~\ref{ref:ia_stability_results}  we present the number of outstanding requests for the instantaneous assignment policy for the three sufficiently large fleet sizes calculated above. We also include results for $N=35$ for all three proportions of adversarial agents to show that the sufficiently large fleet size for an all cooperative fleet is unstable for the $F = 3/5$, and $F=4/5$ cases. The results depicted in the figure show that by increasing the fleet size by $2\Delta E\left[ \eta  \right] {F}$, we can provably recover stability for the instantaneous assignment policy $\bar{\pi}$ as predicted by Theorem~\ref{theorem:new_sufficient_condition_IA}.

In a more intuitive sense, these results for the random assignment policy and the instantaneous assignment policy show that we can recover stability by increasing the fleet size based on the proportion of adversarial agents and the maximum deviation. If the maximum deviation $2\Delta E[\eta]$ is smaller than the original fleet size, then the increase in the number of cooperative agents will be less than the number of adversarial agents in the original fleet. 
Let's look at the specific scenario considered in these numerical experiments. 
When $F=2/5$ and the original fleet size is $N=35$ agents, we get that there are $14$ adversarial agents and $21$ cooperative agents. Since we maintain $F$ when calculating the new sufficiently large fleet size, we obtain that the number of adversarial agents increases to $19$, while the number of cooperative agents increases to $28$ agents. This shows that even though there were $14$ adversarial agents in the original fleet, we only need to increase the number of cooperative agents by $7$ agents to recover stability as expected from our theoretical results. This same observation also holds for the cases when $F=3/5$ and $F=4/5$, showing that our theoretical bounds on the new sufficiently large fleet size result in an increase in the number of cooperative agents that is smaller than the number of adversarial agents in the original fleet size.

\section{Conclusion}
This work addresses autonomous routing problems where adversarial agents impose bounded delays in servicing pickup-and-delivery requests. These delays can accumulate over time, causing an overload of unfulfilled requests and ultimately destabilizing the routing policy. At first, we establish a theoretical threshold on the proportion of adversarial agents that if exceeded leads to the instability of the random assignment policy. Furthermore, we introduce a new sufficient condition on the fleet size that ensures the stability of the random assignment policy and the instantaneous assignment policy in the presence of bounded-delay adversarial agents. We validate our theoretical results through a case study on autonomous taxi routing, leveraging real transportation request data from San Francisco taxicabs. Even though our case study only considers taxicab routing, our results are applicable to other routing settings, including warehouse package routing, freight management, and robotic delivery. As future work, we want to consider other attack models, including attack models where agents actively try to sabotage the system, and models where agents collude to maximize their impacts. We also want to develop monitoring systems to detect and mitigate the effects of adversaries under these new attack models.

\section{Acknowledgements}
We thank Orhan Eren Akgün and Áron Vékássy for their useful discussions of the theoretical results. We gratefully acknowledge AFOSR award $\#$FA9550-22-1-0223 and DARPA YFA award $\#$D24AP00319-00 for partial support of this work.

\appendices
\section{}
\label{appendix:fa_bound}

In this appendix we seek an upper bound on the expected proportion of adversarial agents in the fleet at each assignment index $q$. To achieve this, we consider a worst-case scenario where cooperative agents experience the same delays as adversarial agents. By using this worst-case scenario, we establish symmetry in the assignment dynamics for both adversarial and cooperative agents. In practice, cooperative agents complete requests faster, making them available for reassignment sooner. As a result, they are assigned more tasks on average, while adversarial agents spend more time servicing requests and are assigned fewer tasks in expectation. Consequently, the expected proportion of adversarial agents available to service requests is lower than their overall fleet proportion, $F$, implying that $E\left[ F_t \right] \le F$, while the expected proportion of cooperative agents available to service requests is higher than their overall fleet proportion $1- F$, implying that $E\left[ 1 - F_t \right] \ge 1- F$.

We formally prove this proposition. Denote by $x_t$ the number of adversarial agents available at time $t$ and by $y_t$ the total number of agents at time $t$.

From the algorithm's perspective, all agents are equivalent. Therefore, for any agent $i$, the probability of being available is the same by symmetry.  Now consider $E\left[ {\frac{{{x_t}}}{{{y_t}}}} \right]$, which gives us the expectation of assigning a request to an adversarial agent at time $t$. 
\begin{equation}
\begin{array}{l}
E\left[ {\frac{{{x_t}}}{{{y_t}}}} \right] \\ \mathop =\limits^{(1)}  \sum\limits_{k = 1}^N {E\left[ {\left. {\frac{{{x_t}}}{{{y_t}}}} \right|{y_t} = k} \right]p\left( {{y_t} = k} \right)} \\ \mathop =\limits^{(2)}  \sum\limits_{k = 1}^N {\frac{{{F}k}}{k}p\left( {{y_t} = k} \right)} \\\mathop =\limits^{(3)} {F}\sum\limits_{k = 1}^N {p\left( {{y_t} = k} \right)} \\ \mathop =\limits^{(4)} {F}
\end{array}
\label{e400}   
\end{equation}
In $(1)$ we use the law of total probability. By symmetry, any agent has equal likelihood of being in the selected subset $k$. For a randomly selected subset, the probability of being adversarial is $Fk$ implying $(2)$. $(3)$ results from cancellation of terms and in $(4)$ we sum over all possible options for the fleet size implying that $\sum\limits_{k = 1}^N {p\left( {{y_t} = k} \right)}=1$. Following similar arguments, we obtain that the expectation of assigning a request to a cooperative agent at time $t$ is $E\left[ {\frac{{{y_t-x_t}}}{{{y_t}}}} \right]=1-F $.

\section{}
\label{appendix:min_inequality}
In this appendix we prove that,
\begin{equation}
\begin{array}{l}
\sum\limits_{t = 1}^T {\mathop {\min }\limits_{{v_1},...,{v_{{n_a} + {n_c}}}} E\left[ {\sum\limits_{q = {R_{t - 1}} + 1}^{{R_{t}}} {d{'_q}\left( {{v_{{q}}},{\rho _{{r_q}}}} \right) + d{'_q}\left( {{\rho _{{r_q}}},{\delta _{{r_q}}}} \right)} } \right]}   \\ \ge E\left[ {\sum\limits_{t = 1}^T {\mathop {\min }\limits_{{v_1},...,{v_{{n_a} + {n_c}}}} } \sum\limits_{q = {R_{t - 1}} + 1}^{{R_{t}}} {d{'_q}\left( {{v_{{q}}},{\rho _{{r_q}}}} \right) + d{'_q}\left( {{\rho _{{r_q}}},{\delta _{{r_q}}}} \right)} } \right]
\end{array}    
\label{e130}
\end{equation}
Denote by $Q\left( {\vec v } \right)$ the total distance required to service all requests, by a fleet of agents at locations $\vec v$.   
\begin{equation}
\begin{array}{l}
Q\left( {\vec v } \right) = \sum\limits_{i = 1}^n {d\left( {v_i,{\rho _i}} \right) + d\left( {{\rho _i},{\delta _i}} \right)}    
\end{array}
\label{e131}
\end{equation}
Where $v_i$ corresponds to the location of agent $\ell_i$.
Choosing the assignment of agents that minimizes $Q\left( {\vec v } \right)$ clearly satisfies, 
\begin{equation}
\begin{array}{l}
Q\left( {\vec v } \right) \ge \mathop {\min }\limits_{\vec v '} Q\left( {\vec v '} \right) 
\end{array}
\label{e132}
\end{equation}
Applying the expectation operator to both sides of the inequality yields,
\begin{equation}
\begin{array}{l}
E\left[ {Q\left( {\vec v } \right)} \right] \ge E\left[ {\mathop {\min }\limits_{\vec v '} Q\left( {\vec v '} \right)} \right]   
\end{array}
\label{e133}
\end{equation}
Taking the minimum over the assignment of agents to requests on both sides of equation \ref{e133} yields,
\begin{equation}
\begin{array}{l}
\mathop {\min }\limits_{\vec v ''} E\left[ {Q\left( {\vec v ''} \right)} \right] \ge \mathop {\min }\limits_{\vec v ''} E\left[ {\mathop {\min }\limits_{\vec v '} Q\left( {\vec v '} \right)} \right] \mathop =\limits^{(5)} E\left[ {\mathop {\min }\limits_{\vec v '} Q\left( {\vec v '} \right)} \right]   
\end{array}
\label{e134}
\end{equation}
Where $(5)$ is satisfied since $E\left[ {\mathop {\min }\limits_{\vec v '} Q\left( {\vec v '} \right)} \right]$ is a constant that does not depend on ${\vec v ''}$. Hence we obtain the desired result,
\begin{equation}
\begin{array}{l}
\mathop {\min }\limits_{\vec v ''} E\left[ {Q\left( {\vec v ''} \right)} \right] \ge E\left[ {\mathop {\min }\limits_{\vec v '} Q\left( {\vec v '} \right)} \right]   
\end{array}
\label{e135}
\end{equation}

\bibliographystyle{IEEEtran}
\bibliography{refs}

\end{document}